\newtheorem{theorem}{Theorem}
\newtheorem{lemma}{Lemma}
\newtheorem{proposition}{Proposition}
\newtheorem{assumption}{Assumption}
\newtheorem{remark}{Remark}
\newenvironment{proof} {\noindent{\it Proof}: }{\hfill$\square$}
\newcommand{\R}{\mathbb{R}}
\begin{document}
	
	\begin{frontmatter}
		
		\title{Sparse Sensing, Communication, and Actuation   via  Self-Triggered  Control Algorithms} 
		
		
		\thanks[infor]{The corresponding author is M.~Bahavarnia.}
		
		\author[UMD]{MirSaleh Bahavarnia \thanksref{infor}}\ead{mbahavar@umd.edu},    
		\author[Lehigh]{Hossein K. Mousavi}\ead{mousavi@lehigh.edu},
		\author[Lehigh]{Nader Motee}\ead{motee@lehigh.edu}

		\address[UMD]{Department of Electrical and Computer Engineering and Institute for Systems Research, University of Maryland at College Park, USA}
		\address[Lehigh]{Department of Mechanical Engineering and Mechanics, Lehigh University, Bethlehem, USA}  

		
\begin{abstract} 
We propose a self-triggered control algorithm to reduce onboard processor usage, communication bandwidth, and energy consumption across a linear time-invariant networked control system.  We formulate an optimal control problem by penalizing the $\ell_0$--measures of the feedback gain and the vector of control inputs and maximizing the dwell time between the consecutive triggering times. It is shown that the corresponding $\ell_1$--relaxation of the optimal control problem is feasible and results in a stabilizing feedback control law with   guaranteed performance bounds, while providing a sparse schedule for collecting samples from sensors, communication with other subsystems, and activating the input actuators.  
		\end{abstract}

	\end{frontmatter}
	
	\section{Introduction}
	
	
	
	The design of feedback control strategies with sparse communication topologies has been one of the active research areas in networked control systems in recent years \cite{arastoo2016closed,fazelnia2017convex,lin2013design,motee2017sparsity}, where the control objective is to find a sparse state feedback gain by minimizing a $\ell_1$-regularized, which is usually quadratic, cost functional. Promoting sparsity in the state feedback gain results in  a reduced communication requirement among the subsystems, which is practically plausible in networked systems with limited onboard resources (e.g., short battery life on quadcopters). In this paper, we take further steps and enhance sparsity by using a self-triggered state feedback control mechanism and allowing subsystems to sense less frequently,  communicate whenever performance deteriorates below some pre-specified threshold, and activate their actuators if necessary.   
	
	The fundamental idea of the self-triggered control is to reduce the need for continuous  sensing and actuation by incurring some performance loss in a controlled manner. A performance-preserving condition or Lyapunov-based stability condition, which is known as  the self-triggering condition, is usually verified to determine whether new updates are  necessary or not. In this approach, the next update  time is solely computed based on the current state information  \cite{heemels2012introduction,gommans2014self}. In  \cite{souza2014self}, the problem of linear time-invariant networked systems with limited bandwidth on their communication channels is considered, where the authors develop a self-triggered method to schedule sampling instances as well as switching times of feedback gains with $\mathcal{H}_2$ and $\mathcal{H}_\infty$ performance guarantees. In \cite{nowzari2012self}, the authors employ ideas from  the self-triggered control to develop an algorithm for a network of multiple agents on how to collect new samples and compute localized control inputs in order to achieve optimal static deployment in a given convex region.   The authors of  \cite{nagahara2016maximum} show that if the sparsest control signal for a finite-horizon optimal control is unique, then the corresponding  $\ell_1$-regularized optimal control problem recovers the sparsest solution. They leverage this idea and design a sparse self-triggered control law for linear time-invariant  systems. In \cite{ito2014optimal}, the $L^p$ control problems with $0 < p <1$  is considered, where the authors obtain a maximum principle and conditions for the existence of solution for optimal control problems with $L^0$-regularized quadratic costs. In \cite{gommans2014self}, the triggering times and feedback gains for the quadratic optimal control  of linear time-invariant systems are designed with gauranteed  performance bounds. 
	
\vspace{-0.2cm}	
	In this paper, we propose a control algorithm that is based on ideas from sparse feedback control and self-triggered control design techniques to achieve higher levels of sparsity, both in time and space, by reducing sensing, communication, and input actuation requirements with a prescribed performance bound. To achieve these goals, in Sections \ref{ProFor} and \ref{sec:formulation}, we formulate an optimal control problem in which subsystems measure their own state variables and then share them over a (time-varying) communication topology. Each subsystem receives  the state information of other  neighboring subsystems once (at triggering times) and forms its optimal control input using these samples and apply it for some period of time without updating it. As soon as the performance deteriorates below some threshold, subsystems trigger and collect new samples and repeat the process. The control inputs are formed using state feedback gains that are computed by minimizing a     
	cost function that is penalized by the $\ell_0$-measures of the feedback gain and  the vector of control inputs. In Section \ref{EqRe}, we show that this optimal control problem and its $\ell_1$-relaxation are feasible. To prolong the triggering intervals, where over which new samples will not be collected and feedback gains stay fixed, our algorithm in Section \ref{ssocsection} solves      
	a nonlinear optimization problem for the next triggering time. Then, using this value, a $\ell_1$-relaxation of the optimal control problem is solved to find a sparse feedback gain. In the last step, the value of the next triggering time is corrected using this feedback gain. It is shown that the resulting feedback control strategy is stabilizing and provides guaranteed performance bounds. The usefulness  of our theoretical findings is verified through extensive simulations in Section \ref{NuSi}.  This paper is an outgrowth of \cite{Bahavarnia17} with several updated and new results and simulations.

	\noindent{\it Notations:} The set of real numbers, positive real numbers, positive integer numbers, and non-negative integer numbers are denoted by $\mathbb{R}$, $\mathbb{R}_{++}$, $\mathbb{N}$, and $\mathbb{Z}_{+}$, respectively.   The supremum of a subset of $\R$ is denoted by $\sup$. For a function of time $f(t)$,  $f'(t)=\partial f(t)/\partial t$. 
	The positive semi-definiteness and positive-definiteness are denoted by $\succeq 0$ and $\succ 0$, respectively. The identity matrix is represented by $I$. The Euclidean norm of vector $v$ is denoted by $\|v\|_2$. The cardinality ($\ell_0$ sparsity measure), $\ell_1$ norm, and largest singular value of a matrix $M$ are represented by $\|M\|_0$, $\|M\|_1$, and $\|M\|$, respectively, where $\|M\|_0$ is identical to its number of nonzero elements and $\|M\|_1$ is the sum of absolute values of its elements. A matrix is called  Hurwitz if all of its eigenvalues have negative real parts. The vectorization and determinant of a matrix $M$ are denoted by $\mathrm{vec}(M)$ and $\det(M)$, respectively. The Kronecker product  is denoted by $\otimes$.
	
	\section{Problem Statement} \label{ProFor}
	We consider  the linear time-invariant (LTI) control systems described by
	\begin{equation}
	\dot{x}(t)= Ax(t)+Bu(t),~~ x(0)=x_0, \label{STLQR}
	\end{equation}
	where $x \in \mathbb{R}^n$ denotes the state vector, $u \in \mathbb{R}^m$ represents the control input,   and $x_0$ is the initial state. To stabilize  system \eqref{STLQR} and regularize $x(t)$ toward the origin, we use the sample-and-hold feedback control laws 
	\begin{equation} \label{KTLQR}
	u(t)
	= u_k:= F_k\, x(t_k)= F_k\, x_k,
	\end{equation}
	for all $t \in [t_k,t_{k+1})$ , in which the sequence of time instants $\{t_k\}_{k=0}^\infty \subset \R_+$ are called triggering times, and  $[t_k,t_{k+1})$ is the $k$'th time interval. The resulting sequence of inter-execution times is $\{\delta_k\}_{k=0}^\infty$, wherein
	\begin{equation}
	\delta_k:=t_{k+1}-t_k.
	\end{equation}
	The sequence of feedback gains is also denoted by $ \{F_k\}_{k=1}^\infty \subset \R^{ m \times n}$. 
	The cost-to-go $J_k(F_k,\xi;x_k)$ corresponding to  time interval $[t_k,t_k+\xi)$ is 
	\begin{align}\label{Jk}
	\begin{adjustbox}{max width=218 pt}
	$
	J_k(F_k,\xi;x_k):= \displaystyle \int_{t_k}^{t_{k}+\xi}\left ( {x(t)}^T Q x(t) + {u(t)}^T R u(t) \right) ~ dt
	$
	\end{adjustbox} 
	\end{align}
	for all $\xi \in [0,\delta_k)$ and weight matrices $Q \succ 0 $ and $R \succ 0$. 
	\begin{assumption} \label{contro-observ}
		The pair $(A,B)$ is stabilizable. 
	\end{assumption}
	
	Since $Q \succ 0$ and Assumption \ref{contro-observ} holds, the LQR problem corresponding to cost functional \eqref{Jk}   has a unique stabilizing solution. 
	For a pre-designed and stabilizing feedback gain $\tilde F$,  the infinite-horizon cost value is given by 
	\begin{align}\label{eq:tildeJ}
	\tilde J(x_0):=x_0^T\, \tilde P\, x_0,
	\end{align}
	where $\tilde P \succ 0$ is the unique  solution to  Lyapunov equation
	\begin{equation} \label{FtildeP}
	\left (A+B\tilde{F}\right )^T \tilde{P} + \tilde{P} \left (A+B\tilde{F}\right ) + Q + \tilde{F}^T R \tilde{F}=0.
	\end{equation}
	This solution exists because $A+B \tilde F$ is  Hurwitz  and $Q + \tilde{F}^T R \tilde{F} \succ 0$. 
	A popular choice for  $\tilde{F}$ is the standard linear quadratic regulator (LQR). In the traditional controller design methods, $\tilde F$ is usually non-sparse; i.e., most of its elements are nonzero. 

	The {\it problem} is to design the sequences of inter-execution times $\left \{\delta_k\right \}_{k=0}^\infty$ and  feedback gains $\left \{F_k\right \}_{k=0}^\infty$  in order to ensure the following control objectives: {\it(i)} sparse sensing both in space and time, 
	{\it(ii)} sparse scheduling of actuators, and  {\it(iii)} guaranteed stability and closed-loop performance compared to a well-performing feedback gain $\tilde F$. 
	
	\vspace{-0.2cm}
	
	\section{Initial Problem Developments}\label{sec:formulation}
	
	We show how the objectives of the paper can be translated to an optimization problem. To achieve  objective {\it(i)}, one can maximize the inter-execution time $\delta_k$ and minimize $ \| F_k \|_{0}$. To realize objective {\it(ii)}, we may minimize $ \| u_k \|_{0}$, which is the number of actuators that is used in interval $[t_k,t_{k+1})$. 
	To achieve objective {\it(iii)}, we consider a Lyapunov-based performance and stability analysis, which is a well-investigated methodology in model predictive control (MPC) (e.g., see \cite{gommans2014self}).  First,  the total cost value is given by \vspace{-1mm} 
	\begin{align}
	& J(x_0):=\sum_{k=0}^{\infty} J_k(F_k,\delta_k;x_k).
	\end{align}
	The relative performance loss  with respect to the benchmark cost value \eqref{eq:tildeJ} is
	\begin{align} \label{eq:nu}
	\nu(x_0):= \frac{J(x_0)-\tilde{J}(x_0)}{\tilde{J}(x_0)} .
	\end{align}

	\begin{theorem} \label{PSTABP}
		Suppose that for some given $\alpha>1$, the sequences of triggering times,  $\{t_k\}_{k=0}^\infty$,  inter-execution times $\left \{\delta_k\right \}_{k=0}^\infty$, and  feedback gains $\left \{F_k\right \}_{k=0}^\infty$, inequality  
		\begin{align}
		J_k(F_k,\xi;x_k) \le \alpha \left(V\big(x(t_k)\big)-V\big(x(t_k+\xi)\big)\right), \label{PerfoC}
		\end{align}
		holds for every $\xi \in [0,\delta_k]$, in which
		the Lyapunov function $V:~\R^n \rightarrow \R$   is defined as 
		$$V(x(t)):=x(t)^T\, \tilde{P}\, x(t).$$
		Then,  control law \eqref{KTLQR} is stabilizing. Moreover, 	for all $x_0 \in \R^n$,  the relative performance loss $\nu(x_0)$ satisfies 
		\begin{align}\label{eq:labeldd}
		\nu(x_0) \, \leq \, \alpha-1.
		\end{align} 
	\end{theorem}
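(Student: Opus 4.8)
The plan is to separate the claim into two parts that can be handled independently: a telescoping estimate that delivers the performance bound \eqref{eq:labeldd}, and a monotonicity-and-summability argument that delivers stability. The only structural facts I would use are that the running cost in \eqref{Jk} is nonnegative (because $Q\succ 0$ and $R\succ 0$), that $V(x(t))=x(t)^T\tilde P\,x(t)\ge 0$ with $V\big(x(0)\big)=x_0^T\tilde P\,x_0=\tilde J(x_0)$, and that the closed-loop trajectory of \eqref{STLQR}--\eqref{KTLQR} is continuous, so that $x(t_k)=x_k$ and $x(t_k+\delta_k)=x_{k+1}$.

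For the performance bound I would first specialize \eqref{PerfoC} to $\xi=\delta_k$, which by continuity of $x(t)$ reads $J_k(F_k,\delta_k;x_k)\le\alpha\big(V(x_k)-V(x_{k+1})\big)$. Summing over $k=0,\dots,N-1$ and using the telescoping of the right-hand side gives
\[
\sum_{k=0}^{N-1}J_k(F_k,\delta_k;x_k)\ \le\ \alpha\big(V(x_0)-V(x_N)\big)\ \le\ \alpha\,V(x_0),
\]
where the last step uses $V(x_N)\ge 0$ (as $\tilde P\succ 0$). Letting $N\to\infty$ yields $J(x_0)\le\alpha\,\tilde J(x_0)$, and since $\tilde J(x_0)=x_0^T\tilde P\,x_0>0$ for every $x_0\neq 0$, substituting into \eqref{eq:nu} produces exactly $\nu(x_0)\le\alpha-1$. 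This part is routine once the telescoping is arranged.

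For stability I would use \eqref{PerfoC} for all $\xi\in[0,\delta_k]$ together with $J_k(F_k,\xi;x_k)\ge\int_{t_k}^{t_k+\xi}x(s)^TQx(s)\,ds\ge 0$. This gives $V\big(x(t_k+\xi)\big)\le V(x_k)$, and chaining the interval estimates across $k$ yields, for every $t\ge 0$,
\[
V\big(x(t)\big)\ \le\ V(x_0)-\frac{1}{\alpha}\int_0^{t}x(s)^TQx(s)\,ds.
\]
With $\lambda_{\min}(\tilde P)\|x\|_2^2\le V(x)\le\lambda_{\max}(\tilde P)\|x\|_2^2$, the bound $V(x(t))\le V(x_0)$ furnishes Lyapunov stability of the origin and boundedness of the trajectory; the displayed inequality, with $V\ge 0$ and $Q\succ 0$, furnishes $\int_0^\infty\|x(s)\|_2^2\,ds<\infty$.

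The main obstacle I anticipate is upgrading square-integrability to attractivity, i.e. $x(t)\to 0$. I would close this gap with Barbalat's lemma: since $\dot x=Ax+Bu$ with $u$ piecewise constant and $x(t)$ already bounded, $t\mapsto\|x(t)\|_2^2$ is uniformly continuous as soon as the control magnitudes $\|u_k\|_2=\|F_k x_k\|_2$ are uniformly bounded. The cleanest way to guarantee this is a positive minimum dwell time $\underline\delta:=\inf_k\delta_k>0$ (which simultaneously excludes Zeno behavior), because then $\underline\delta\,u_k^TRu_k\le J_k(F_k,\delta_k;x_k)\le\alpha\,V(x_0)$ bounds $\|u_k\|_2$ uniformly. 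Under this mild well-posedness condition, $\int_0^\infty\|x(s)\|_2^2\,ds<\infty$ together with uniform continuity forces $\|x(t)\|_2\to 0$, which establishes that \eqref{KTLQR} is stabilizing. I expect this convergence step, rather than the telescoping estimate, to require the real care.
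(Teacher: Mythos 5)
Your proof of the performance bound is, step for step, the paper's own argument: the paper defines the partial sums $S_N=\sum_{k=0}^{N} J_k(F_k,\delta_k;x_k)$, applies \eqref{PerfoC} in the limit $\xi\to\delta_k$, telescopes to $S_N\le \alpha V(x_0)=\alpha\tilde J(x_0)$, and passes to the limit by monotone boundedness to get \eqref{StIn} and hence \eqref{eq:labeldd} --- exactly your first part. Where you genuinely diverge is on the word ``stabilizing'': the paper's proof stops at $J(x_0)\le\alpha\tilde J(x_0)$ and never argues stability at all, implicitly treating finite total cost (hence $x\in L^2$, since $Q\succ 0$) as synonymous with stabilization, and $L^2$ membership alone does not force $x(t)\to 0$. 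Your repair is sound and is the more careful route: you use \eqref{PerfoC} for \emph{all} $\xi\in[0,\delta_k]$ --- a piece of the hypothesis the paper's proof never touches --- to get $V(x(t))\le V(x_0)$, which yields Lyapunov stability and boundedness of the trajectory, and then square-integrability plus Barbalat's lemma for attractivity; your observation that uniform continuity of $\|x(t)\|_2^2$ requires bounded control magnitudes, secured via $\underline\delta\, u_k^T R u_k\le J_k(F_k,\delta_k;x_k)\le \alpha V(x_0)$ under $\underline\delta:=\inf_k\delta_k>0$, is correct. The one caveat is that this minimum dwell time is an added hypothesis not present in the theorem statement, so strictly you prove a variant of the theorem; but some such non-Zeno condition is needed even for the closed loop to be defined on $[0,\infty)$ (the paper's feasibility results, Theorem \ref{Feasi} and Proposition \ref{necsuf}, guarantee $\delta_k>0$ for each $k$ but not uniformly in $k$), so your hypothesis is a defensible strengthening, and the attractivity step you flag as the delicate one is precisely what the paper's proof leaves unsupported.
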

	
	\vspace{-.6cm}
	
	\begin{proof}
		Let us define  
		$
		S_N:= \displaystyle \sum_{k=0}^{N} J_k(F_k,\delta_k;x_k).
		$
		We can use \eqref{PerfoC} and the limit of $\xi \rightarrow \delta_k$ for each term to get
		\begin{align*}
		S_N\,&\leq \, \sum_{k=0}^{N} \alpha \left(V(x_k)-V(x_{k+1})\right)\\
		& =\alpha (V(x_0)-V(x_N)) <\alpha V(x_0)=\alpha \tilde J(x_0). 
		\end{align*}
		Since $\{S_N\}_{N=0}^\infty$ is a nondecreasing sequence that is bounded from above, it has a limit, that is actually $J(x_0)$. Moreover, 
		\begin{align}\label{StIn}
		J(x_0)=\lim_{N \rightarrow \infty } S_N \leq \alpha \tilde J(x_0). 
		\end{align}
		A simple reorganization of (\ref{StIn}) reveals that \eqref{eq:labeldd} holds. 
	\end{proof}



	We combine these objectives and formulate the self-triggered sparse optimal control   problem as  
	\begin{flalign}
	&\underset{F_k,~\delta_k}{\mathrm{minimize}}~~~~~~ -\delta_k+ \gamma \| F_k \|_{0} + \eta \| u_k \|_{0} \tag{\bf{P1}} \label{EQP1} && 
	\\
	& \mathrm{subject~to}:~~~ (\ref{STLQR})~~\mathrm{and}~(\ref{KTLQR}), \notag
	&& \\
	& ~~~~~~~~~~~~~~~~~~~\mathrm{~\eqref{PerfoC}} \mathrm{~for~all~}  \xi \in [0,\delta_k]  && \notag 
	\end{flalign}
	The nonnegative parameters $\gamma$ and $\eta$ adjust the balance between the levels of spatial and temporal sparsity in terms of sensing and actuation, respectively. 
	
	\begin{theorem} 
		Every feasible solution of problem \eqref{EQP1}  is  a stabilizing control law with performance guarantee  \eqref{eq:labeldd}.  
	\end{theorem}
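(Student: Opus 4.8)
The plan is to recognize that this statement is a direct corollary of Theorem~\ref{PSTABP}, requiring essentially no new computation. First I would unpack what it means for a pair $(F_k,\delta_k)$ to be feasible for problem~\eqref{EQP1}: by construction the feasibility set is defined by the plant dynamics~\eqref{STLQR}, the sample-and-hold control structure~\eqref{KTLQR}, and the key inequality~\eqref{PerfoC} required to hold for all $\xi\in[0,\delta_k]$. These three conditions are precisely the standing hypotheses of Theorem~\ref{PSTABP}. In particular, the objective functional $-\delta_k+\gamma\|F_k\|_0+\eta\|u_k\|_0$ plays no role in determining feasibility; it only discriminates among feasible points by their optimality. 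Hence the conclusions we seek attach to \emph{every} feasible point, not merely to the minimizers, which is exactly the content of the statement.

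Second, I would address the indexing. Problem~\eqref{EQP1} is posed per interval, so a feasible solution supplies a pair $(F_k,\delta_k)$ for each $k$. Assembling these across $k=0,1,2,\dots$ produces the sequences $\{t_k\}$, $\{\delta_k\}$, and $\{F_k\}$ to which Theorem~\ref{PSTABP} applies: since~\eqref{PerfoC} is enforced on $[0,\delta_k]$ for every $k$, the hypothesis of Theorem~\ref{PSTABP} is met globally along the generated trajectory for some fixed $\alpha>1$. Invoking that theorem then immediately delivers both the stabilizing property of control law~\eqref{KTLQR} and the performance bound $\nu(x_0)\le\alpha-1$, that is,~\eqref{eq:labeldd}.

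Because the argument reduces to a verbatim application of an already-established theorem, I do not anticipate a genuine obstacle. The only subtlety worth an explicit sentence is the distinction between feasibility and optimality: one must stress that the stability and performance guarantees flow from the constraint~\eqref{PerfoC} alone, independently of how well the $\ell_0$-penalized objective is optimized, so they are inherited by any feasible control law. This observation is what lets the conclusions of Theorem~\ref{PSTABP} pass to the entire feasible set of~\eqref{EQP1} without further work.
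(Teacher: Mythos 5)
Your proposal is correct and coincides with the paper's (implicit) argument: the paper states this theorem without a written proof precisely because feasibility of \eqref{EQP1} means the constraint \eqref{PerfoC} holds on $[0,\delta_k]$ for every $k$, which is exactly the hypothesis of Theorem~\ref{PSTABP}, whose conclusions then transfer verbatim. Your explicit remark that the $\ell_0$-penalized objective affects only optimality, not feasibility, is the right point to emphasize and requires no further work.
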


	
	\begin{remark}
		The method proposed by \cite{heemels2012introduction} solely takes advantage of the Lyapunov function evolutions. However, our proposed problem setup resembles the method proposed by \cite{gommans2014self}, in which in addition to the Lyapunov functions, a performance-based condition is utilized. 
	\end{remark}
	

	\section{Equivalent Formulation, Relaxation, and Feasibility } \label{EqRe}
	We show that \eqref{EQP1} can be reformulated as a regularized quadratically-constrained quadratic program (QCQP) when $\delta_k$ is kept fixed. Then, we suggest  appropriate relaxations that make the problem more   tractable. 
	
	
	\begin{lemma}For all $k \in \mathbb{N}$ and $j \in \mathbb{Z}_{+}$,  feedback control law (\ref{KTLQR}) can be decomposed as
		\begin{equation*}
		u(t)=F_k N_k x_0
		\end{equation*}
		for all $t \in [t_k,t_{k+1})$, where $N_0:= I$ and $N_k$ is recursively given by
		\begin{align*}
		N_k & :=  M_{k-1}({\delta}_{k-1})  \hspace{0.0cm} N_{k-1}
		\end{align*}
		and the matrices $M_j$'s are given by 
		$$
		M_j(\xi) := \e^{A \xi} \big(I+Z(\xi)BF_j\big), ~Z(\xi): =  \int_{0}^{\xi} \e^{-A \tau} d\tau.
		$$
	\end{lemma}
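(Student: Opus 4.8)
The plan is to prove the claim by induction on $k$, exploiting the fact that on each interval the input is held constant and then applying the variation-of-constants formula. Since control law \eqref{KTLQR} fixes $u(t)=u_k=F_k x_k$ for all $t\in[t_k,t_{k+1})$, it suffices to establish that the sampled state obeys $x_k=N_k x_0$; the decomposition $u(t)=F_k N_k x_0$ then follows immediately by left-multiplying with $F_k$.

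For the inductive step I would solve \eqref{STLQR} on a single interval $[t_k,t_k+\xi)$ driven by the constant input $u_k=F_k x_k$. The Duhamel (variation-of-constants) formula gives
\[
x(t_k+\xi)=e^{A\xi}x_k+\left(\int_0^\xi e^{A(\xi-\tau)}\,d\tau\right)BF_k x_k.
\]
Factoring $e^{A\xi}$ out of the integral turns $\int_0^\xi e^{A(\xi-\tau)}\,d\tau$ into $e^{A\xi}\int_0^\xi e^{-A\tau}\,d\tau=e^{A\xi}Z(\xi)$, so that
\[
x(t_k+\xi)=e^{A\xi}\big(I+Z(\xi)BF_k\big)x_k=M_k(\xi)\,x_k.
\]
Evaluating at $\xi=\delta_k$ yields the one-step recursion $x_{k+1}=M_k(\delta_k)x_k$ for the sampled states.

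It then remains to unroll this recursion. With the base case $x_0=N_0 x_0$ (since $N_0=I$), the hypothesis $x_{k-1}=N_{k-1}x_0$ gives $x_k=M_{k-1}(\delta_{k-1})x_{k-1}=M_{k-1}(\delta_{k-1})N_{k-1}x_0=N_k x_0$, which matches exactly the stated definition of $N_k$. Substituting back into \eqref{KTLQR} completes the argument.

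I do not expect a genuine obstacle here, since the result is essentially the closed-form propagation of an LTI system under a piecewise-constant (sample-and-hold) input. The only points requiring care are the change of variables that isolates $Z(\xi)$ (one must keep the matrix factor $e^{A\xi}$ on the correct side, as $A$ and $e^{-A\tau}$ commute but the bookkeeping is easy to mishandle) and the matching of indices among the interval gain $F_k$, the transition matrix $M_{k-1}$, and the accumulated product $N_k$, so that the base case and the recursion align. Both are routine once the per-interval formula is in hand.
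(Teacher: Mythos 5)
Your proposal is correct and follows essentially the same route as the paper: solving the LTI dynamics with the constant held input via the variation-of-constants formula, factoring out $e^{A\xi}$ to obtain $M_k(\xi)$, and then inducting on the sampled-state recursion $x_{k+1}=M_k(\delta_k)x_k$ to get $x_k=N_k x_0$. The only cosmetic difference is that the paper invokes continuity of $x(t)$ at $t=t_{k+1}$ to justify evaluating at the endpoint of the half-open interval, a detail your evaluation at $\xi=\delta_k$ handles implicitly.
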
 
	\begin{proof} Solving system (\ref{STLQR}) together with (\ref{KTLQR}) for  time interval $[t_k,t_{k+1})$ gives us \vspace{-5mm}
		\begin{align*}
		x(t) &= \e^{A(t-t_k)}x(t_k) + \int_{t_k}^{t} \e^{A(t-\phi)}BF_k  x(t_k) d\phi\\
		&= \e^{A(t-t_k)} \left (I + \int_{0}^{t-t_k} \e^{A(-\tau)}BF_k~d\tau \right ) x(t_k)\\
		& =  \e^{A(t-t_k)} \big(I + Z(t-t_k) BF_k \big) x(t_k)\\
		& =  M_k(t-t_k) x(t_k).
		\end{align*}
		Since $x(t)$ is continuous  at 
		$t=t_{k+1}$, thus  $$x(t_{k+1})=\e^{A \delta_k} \big(I + Z(\delta_k) BF_k \big) x(t_k)=M_k(\delta_k)x(t_k).$$ By induction,  $x(t_k)=N_k x_0$.
	\end{proof}

	We derive a quadratic form for the cost-to-go. 
	
	\begin{lemma} \label{propJ}
		The $k$'{th} time interval cost $J_k(F_k,\xi;x_k)$ can be expressed as the quadratic term 
		\begin{equation*}
		J_k(F_k,\xi;x_k)=x_k^T Y_k(F_k,\xi) x_k,
		\end{equation*}
		where the matrix $Y_k$ is given by
		\begin{align*} \begin{adjustbox}{max width=245 pt}
		$
		Y_k(F_k,\xi) := H_0(\xi)+F_k^T H_1(\xi)^T + H_1(\xi) F_k +  F_k^T H_2(\xi) F_k,
		$
		\end{adjustbox}
		\end{align*}
		for all $k \in \mathbb{Z}_{+}$ and 
		\begin{align*}
		H_0(\xi) & :=  \int_{0}^{\xi} \e^{A^T \tau}Q\e^{A \tau}d\tau, \\
		H_1(\xi) & :=  \int_{0}^{\xi} \e^{A^T \tau}Q \e^{A \tau}Z(\tau)Bd\tau, \\
		H_2(\xi) & :=  \int_{0}^{\xi} \big (\e^{A \tau}Z(\tau)B \big)^T Q \big(\e^{A \tau}Z(\tau)B \big)d\tau + \xi R.
		\end{align*}
	\end{lemma}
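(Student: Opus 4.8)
The plan is to substitute the closed-form trajectory from the previous lemma directly into the integral defining the cost-to-go and then collect terms by powers of $F_k$. First I would perform the change of variable $\tau = t - t_k$ in \eqref{Jk}, so that the integration runs over $[0,\xi)$ and the trajectory on this interval reads $x(t_k+\tau) = \e^{A\tau}\big(I + Z(\tau)BF_k\big)x_k$ while the control $u(t_k+\tau)=F_k x_k$ stays constant. Since $x_k$ and $F_k$ do not depend on $\tau$, they can be factored outside the integral, leaving an integrand that is a quadratic form in $x_k$ sandwiching a $\tau$-dependent kernel.

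Next I would expand the two pieces of the integrand separately. For the state-penalty term ${x}^T Q x$, writing $W(\tau):=\e^{A^T\tau}Q\e^{A\tau}$ and $G(\tau):=Z(\tau)B$, the integrand becomes $x_k^T\big(I+G(\tau)F_k\big)^T W(\tau)\big(I+G(\tau)F_k\big)x_k$, which multiplies out into four blocks: $W$, $WGF_k$, $F_k^T G^T W$, and $F_k^T G^T W G F_k$. Integrating these over $[0,\xi)$ produces exactly $H_0(\xi)$, $H_1(\xi)F_k$, $F_k^T H_1(\xi)^T$, and the first integral in $H_2(\xi)$, respectively; the identity $\int_0^\xi G(\tau)^T W(\tau)\,d\tau = H_1(\xi)^T$ is just the transpose of the definition of $H_1$, so the two cross terms assemble into the symmetric pair $H_1(\xi)F_k+F_k^T H_1(\xi)^T$. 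For the control-penalty term ${u}^T R u = x_k^T F_k^T R F_k x_k$, the integrand is constant in $\tau$, so integration contributes the factor $\xi$, giving $F_k^T(\xi R)F_k$; this supplies the remaining $\xi R$ summand inside $H_2(\xi)$.

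Collecting the four contributions yields $J_k = x_k^T\big(H_0 + H_1 F_k + F_k^T H_1^T + F_k^T H_2 F_k\big)x_k = x_k^T Y_k(F_k,\xi)x_k$, as claimed.

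The computation has no genuine obstacle; it is elementary linear algebra once the trajectory is inserted. The only point requiring care is the transpose bookkeeping, in particular recognizing that $\big(\e^{A\tau}Z(\tau)B\big)^T Q\big(\e^{A\tau}Z(\tau)B\big) = G(\tau)^T W(\tau) G(\tau)$ and that the two cross terms are genuine transposes of one another, so that the four integrated blocks line up precisely with the stated $H_0$, $H_1$, $H_2$ and $Y_k$ comes out symmetric.
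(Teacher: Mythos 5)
Your proof is correct and follows essentially the same route as the paper: split the cost into state and input parts, substitute $x(t_k+\tau)=M_k(\tau)x_k$ with $M_k(\tau)=\e^{A\tau}\big(I+Z(\tau)BF_k\big)$, integrate the constant input term to get $\xi F_k^T R F_k$, and expand $M_k(\tau)^T Q M_k(\tau)$ to match $H_0$, $H_1$, $H_2$. The only difference is cosmetic --- you carry out the final four-block expansion explicitly, which the paper leaves implicit after displaying $Y_k(F_k,\xi)=\int_0^\xi M_k(\tau)^T Q M_k(\tau)\,d\tau+\xi F_k^T R F_k$.
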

	\vspace{-0.5cm}
	\begin{proof}
		We can write 
		\begin{equation*}
		J_k(F_k,\xi;x_k)=J_k^{x}(F_k,\xi;x_k)+J_k^{u}(F_k,\xi;x_k),
		\end{equation*}
		wherein
		\begin{align*}
		J_k^{x}(F_k,\xi;x_k) &= \int_{t_k}^{t_{k}+\xi} x_k^T M_k(t-t_k)^T Q M_k(t-t_k) x_k dt  \\
		& =  x_k^T \int_{0}^{\xi} M_k(\tau)^T Q M_k(\tau)d \tau x_k 
		\end{align*}
		and \vspace{-4mm}
		\begin{align*} \begin{adjustbox}{max width=240 pt}
		$
		J_k^{u}(F_k,\xi;x_k) = \displaystyle \int_{t_k}^{t_{k}+\xi} x_k^T F_k^T R F_k x_k dt =  x_k^T (\xi F_k^T R F_k) x_k .
		$
		\end{adjustbox}
		\end{align*}Then, we get $J_k(F_k,\xi;x_k)= x_k^T Y_k(F_k,\xi) x_k$, wherein
		\begin{equation} \label{Yintegral}
		Y_k(F_k,\xi)= \int_{0}^{\xi} M_k(\tau)^T Q M_k(\tau)d\tau + \xi_k F_k^T R F_k.
		\end{equation}
The claim follows by substituting $M_k(\tau)=\e^{A \tau} \big(I+Z(\tau)BF_k \big)$ in the right side of \eqref{Yintegral}. 
	\end{proof}
	
	When $\delta_k$ is kept fixed, the following proposition states an important property about matrix $H_2(\xi)$ which leads to the convex reformulation of  \eqref{EQP1} (in terms of  variable $F_k$) .
	\vspace{0.1cm}
	\begin{proposition} \label{H2del}
		For all $\xi  >0$,  $H_2(\xi)$ is positive-definite. 
	\end{proposition}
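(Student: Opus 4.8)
The plan is to exploit the additive structure of $H_2(\xi)$: it is written as the sum of an integral term and the term $\xi R$, and I would argue that the integral term is positive-semidefinite while $\xi R$ is positive-definite whenever $\xi>0$. Since the sum of a positive-semidefinite and a positive-definite matrix is positive-definite, the claim follows.

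First I would fix an arbitrary nonzero vector $v \in \R^m$ and examine the scalar quadratic form $v^T H_2(\xi) v$. Writing $W(\tau) := e^{A\tau} Z(\tau) B$ for brevity, the definition of $H_2(\xi)$ yields
$$v^T H_2(\xi)\, v = \int_0^\xi \big(W(\tau)\, v\big)^T Q \big(W(\tau)\, v\big)\, d\tau + \xi\, v^T R\, v.$$
Because $Q \succ 0$, the integrand $\big(W(\tau)v\big)^T Q \big(W(\tau)v\big)$ is nonnegative for every $\tau$, so the integral is nonnegative; this shows the integral term contributes a positive-semidefinite matrix.

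Next, since $R \succ 0$ and $\xi > 0$, the second term obeys $\xi\, v^T R\, v > 0$ for every $v \neq 0$. Adding the nonnegative integral contribution to this strictly positive quantity keeps it strictly positive, so $v^T H_2(\xi)\, v > 0$ for all $v \neq 0$, which is precisely positive-definiteness of $H_2(\xi)$.

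There is no substantive obstacle in this argument; the only point needing a little care is the justification that the integral of a pointwise positive-semidefinite matrix-valued function is itself positive-semidefinite, which I would obtain simply by pulling $v$ inside the integral via linearity, as above. It is worth emphasizing that the integral term by itself need not be positive-definite (for instance $W(0)=0$, and it may be rank-deficient), so the conclusion genuinely hinges on the presence of the $\xi R$ term together with the hypotheses $R \succ 0$ and $\xi > 0$.
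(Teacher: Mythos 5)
Your proof is correct and follows essentially the same route as the paper's: both fix an arbitrary nonzero $v$, pull it inside the integral to show the integral term contributes a nonnegative quantity (using $Q \succ 0$), and conclude strict positivity from the $\xi\, v^T R\, v > 0$ term since $R \succ 0$ and $\xi > 0$. Your closing remark that the integral term alone may be rank-deficient (e.g., $W(0)=0$) is a nice observation the paper does not make, but the core argument is identical.
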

	
	\begin{proof}
		For arbitrary nonzero vector $v \in \mathbb{R}^m$, we get 
		\begin{align*}
		&v^TH_2(\xi)v \\ 
		&= v^T \left( \int_{0}^{\xi} \big (e^{A \tau}Z(\tau)B \big)^T Q \big(e^{A \tau}Z(\tau)B \big)d\tau + \xi R \right ) v\\
		&= \int_{0}^{\xi} \big (e^{A \tau}Z(\tau)B v \big)^T Q \big(e^{A \tau}Z(\tau)B v \big)d\tau + \xi v^T R v.
		\end{align*}Since $Q \succ 0$, the last integrand is non-negative. Therefore, the integral is also non-negative. Since $R \succ 0$, $\xi v^T R v>0$. Thus, $v^T H_2(\xi)v > 0$ and the conclusion follows. 
	\end{proof}
	
	Now, we can state the following equivalent formulation. 
	
	\begin{theorem} \label{propo2}
		The  optimization problem \eqref{EQP1} is equivalent to 
		\begin{flalign}
		&\underset{f_k,\delta_k}{\mathrm{minimize}}~~
		-\delta_k+ \gamma \| f_k \|_{0} + \eta \| (x_k^T \otimes I) f_k \|_{0} \tag{\bf{P2}} && \label{quadobject}   
		\\
		& \mathrm{subject~to}:  \mathrm{~for~all~} \xi \in [0,\delta_k): \label{quadcon} \\
		& ~~~~~~~~~~~~~~~~~~\frac{1}{2}f_k^T P_1(\xi) f_k + q_1(\xi)^T f_k + r_1(\xi) \le 0, \notag 
		\end{flalign}
		where $f_k:=\mathrm{vec}(F_k)$ and 
		\begin{align*}
		&\begin{adjustbox}{max width=245 pt}$
		P_1(\xi) :=2(x_k x_k^T) \otimes \big(H_2(\xi)+\alpha B^T Z(\xi)^T e^{A^T \xi}\tilde{P}e^{A \xi} Z(\xi) B \big),$
		\end{adjustbox} \\
		& \begin{adjustbox}{max width=245 pt}$
		q_1(\xi) := 2 \mathrm{vec}\bigg(\big(H_1(\xi)^T+\alpha B^TZ(\xi)^T e^{A^T \xi}\tilde{P}e^{A \xi}\big)x_k x_k^T \bigg),$
		\end{adjustbox}\\
		& \begin{adjustbox}{max width=245 pt}$
		r_1(\xi) := x_k^T \big(H_0(\xi)+ \alpha(e^{A^T \xi}\tilde{P}e^{A \xi}-\tilde{P})\big)x_k.
		$
		\end{adjustbox}
		\end{align*}
	\end{theorem}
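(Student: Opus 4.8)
The plan is to verify the equivalence termwise under the change of variables $f_k := \mathrm{vec}(F_k)$: first I would match the two objective functions, and then show that the single quadratic constraint \eqref{quadcon} is an exact rewriting of the performance condition \eqref{PerfoC}.

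For the objective, since vectorization only rearranges the entries of $F_k$ it preserves the sparsity pattern, so $\|F_k\|_0 = \|f_k\|_0$. For the actuation term I would apply the identity $\mathrm{vec}(AXB) = (B^T\otimes A)\mathrm{vec}(X)$ with $A=I$, $X=F_k$, $B=x_k$ to obtain $u_k = F_k x_k = (x_k^T\otimes I)f_k$; as $u_k$ is already a vector, this gives $\|u_k\|_0 = \|(x_k^T\otimes I)f_k\|_0$, which is exactly the second regularizer in \eqref{quadobject}. Hence the two objectives coincide.

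For the constraint I would start from \eqref{PerfoC} and substitute the quadratic form $J_k(F_k,\xi;x_k) = x_k^T Y_k(F_k,\xi)x_k$ from Lemma \ref{propJ} together with the decomposition $x(t_k+\xi) = M_k(\xi)x_k = e^{A\xi}\big(I+Z(\xi)BF_k\big)x_k$ established above, so that $V(x(t_k+\xi)) = x_k^T\big(I+Z(\xi)BF_k\big)^T e^{A^T\xi}\tilde{P}e^{A\xi}\big(I+Z(\xi)BF_k\big)x_k$ and $V(x(t_k)) = x_k^T\tilde{P}x_k$. Rearranging \eqref{PerfoC} into $J_k + \alpha V(x(t_k+\xi)) - \alpha V(x(t_k)) \le 0$ and collecting terms by their degree in $F_k$, the part independent of $F_k$ is $x_k^T\big(H_0(\xi)+\alpha(e^{A^T\xi}\tilde{P}e^{A\xi}-\tilde{P})\big)x_k = r_1(\xi)$. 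Using the symmetry of $\tilde{P}$, the two scalar terms linear in $F_k$ (one from $Y_k$, one from $V(x(t_k+\xi))$) are mutual transposes and hence equal, so the linear part reduces to $2x_k^T\big(H_1(\xi)+\alpha e^{A^T\xi}\tilde{P}e^{A\xi}Z(\xi)B\big)F_k x_k$, while the quadratic part is $(F_k x_k)^T G(\xi)(F_k x_k)$ with $G(\xi) := H_2(\xi)+\alpha B^T Z(\xi)^T e^{A^T\xi}\tilde{P}e^{A\xi}Z(\xi)B$.

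The remaining step is to convert these matrix expressions into vector form. For the quadratic part I would write $F_k x_k = (x_k^T\otimes I)f_k$ and invoke the block identity $(x_k\otimes I)G(\xi)(x_k^T\otimes I) = (x_k x_k^T)\otimes G(\xi)$, giving $(F_k x_k)^T G(\xi)(F_k x_k) = f_k^T\big[(x_k x_k^T)\otimes G(\xi)\big]f_k = \frac{1}{2}f_k^T P_1(\xi)f_k$. For the linear part I would use $x_k^T C F_k x_k = \mathrm{tr}(C F_k x_k x_k^T) = \mathrm{vec}(C^T)^T(x_k x_k^T\otimes I)f_k = \mathrm{vec}(C^T x_k x_k^T)^T f_k$ with $C = H_1(\xi)+\alpha e^{A^T\xi}\tilde{P}e^{A\xi}Z(\xi)B$, which produces exactly $q_1(\xi)^T f_k$. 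Matching the three pieces shows \eqref{PerfoC} and \eqref{quadcon} are the same inequality for each fixed $\xi$; finally, since every coefficient is continuous in $\xi$, imposing it on $[0,\delta_k)$ is equivalent, by continuity at the endpoint, to imposing it on $[0,\delta_k]$. I expect the main obstacle to be the careful bookkeeping of the Kronecker and vectorization identities in this last step, in particular verifying the block identity $(x_k\otimes I)G(x_k^T\otimes I)=(x_k x_k^T)\otimes G$ and the trace manipulation $\mathrm{vec}(C^T)^T(x_k x_k^T\otimes I)=\mathrm{vec}(C^T x_k x_k^T)^T$, rather than any conceptual difficulty, since the reformulation is an exact algebraic rewriting with no loss.
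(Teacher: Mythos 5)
Your proposal is correct and follows essentially the same route as the paper's proof, which simply invokes the identity $\mathrm{vec}(UVW)=(W^T\otimes U)\mathrm{vec}(V)$ for the objective and states that its repeated application to \eqref{PerfoC} yields \eqref{quadcon}; you have merely carried out in full the algebra the paper leaves to the reader, and all your Kronecker/trace manipulations check out. Your explicit continuity argument at $\xi=\delta_k$ is a small bonus the paper's proof omits (it glosses over the $[0,\delta_k)$ versus $[0,\delta_k]$ discrepancy between the two formulations).
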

	\vspace{-0.5cm}
	\begin{proof}
		The building block of the proof is the identity
		\begin{equation} \label{eq:id}
		\mathrm{vec}(UVW)= (W^T \otimes U )\mathrm{vec}(V)
		\end{equation}
		for any triplet $(U,V,W)$. Using \eqref{eq:id} for $U= I$, $V=F_k$, and $W=x_k$, the objective function of \eqref{EQP1} takes the form of objective function in \eqref{quadobject}. Likewise, by repeated application of \eqref{eq:id} on    \eqref{PerfoC}, one can verify \eqref{quadcon}.
	\end{proof}
	
	If we combine Theorem \ref{propo2} and  of Proposition \ref{H2del}, we observe that constraint \eqref{quadcon} involves a quadratic convex function of variable $f_k=\mathrm{vec}(F_k)$. Thus, when $\delta_k$ is kept fixed, \eqref{quadobject} takes the form of a regularized QCQP. Next, we relax \eqref{quadobject} by replacing the $\ell_0$-norm with its convex surrogate  the $\ell_1$-norm to get the following form. 
	\begin{flalign}
	&\underset{f_k,\delta_k}{\mathrm{minimize}}~~
	-\delta_k+ \gamma \| f_k \|_{1} + \eta \| (x_k^T \otimes I) f_k \|_{1}\tag{\bf P3} \label{quadob} &&  
	\\
	& \mathrm{subject~to}:  \mathrm{~for~all~} \xi \in [0,\delta_k]: \label{quadcon} \\
	& ~~~~~~~~~~~~~~~~~~\frac{1}{2}f_k^T P_1(\xi) f_k + q_1(\xi)^T f_k + r_1(\xi) \le 0, \notag 
	\end{flalign}
	Although the $\ell_1$-relaxation makes \eqref{quadobject} more tractable,  the left hand side of constraint \eqref{quadob} may be still a nonconvex function of $\xi$. Thus, solving \eqref{quadob} for optimal solutions generally remains a difficult non-convex task. In order to sub-optimally solve \eqref{quadob} we propose  a two-stage method that is discussed in Section \ref{ssocsection}. The constraints of relaxed problem \eqref{quadob} and the original problem \eqref{EQP1} are identical. Moreover, stability and performance guarantees of  \eqref{EQP1} originate from its constraint. Hence,  their feasibility, stability, and performance guarantee are     equivalent. 
	
	\begin{theorem} \label{thm:eq}
		The feasible solutions to  \eqref{quadob} are feasible solutions to \eqref{EQP1}. Moreover, any solution to   \eqref{quadob}  results in a stabilizing control \eqref{KTLQR} law with performance guarantee  \eqref{eq:labeldd}.  
	\end{theorem}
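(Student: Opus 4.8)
The plan is to leverage the fact that replacing the $\ell_0$-measures in \eqref{EQP1} by their $\ell_1$ surrogates to obtain \eqref{quadob} changes only the objective and leaves the constraint set untouched; every feasibility, stability, and performance property therefore transfers directly. First I would recall Theorem \ref{propo2}, which shows that, under the vectorization $f_k=\mathrm{vec}(F_k)$, the performance-and-stability condition \eqref{PerfoC} of \eqref{EQP1} is exactly the quadratic inequality appearing in \eqref{quadob}. Because vectorization is a linear bijection between $F_k\in\R^{m\times n}$ and $f_k\in\R^{mn}$, the two descriptions of the feasible set are in one-to-one correspondence.

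The second step is to observe that the relaxed problem \eqref{quadob} and the equivalent reformulation \eqref{quadobject} of \eqref{EQP1} carry one and the same constraint; only the regularizers differ, with $\|f_k\|_0$ and $\|(x_k^T\otimes I)f_k\|_0$ replaced by $\|f_k\|_1$ and $\|(x_k^T\otimes I)f_k\|_1$. Hence any pair $(f_k,\delta_k)$ that is feasible for \eqref{quadob} satisfies that quadratic constraint for all $\xi\in[0,\delta_k]$, and by Theorem \ref{propo2} the corresponding pair $(F_k,\delta_k)$ satisfies \eqref{PerfoC} for all $\xi\in[0,\delta_k]$. Since \eqref{PerfoC} over $[0,\delta_k]$ is precisely the feasibility requirement of \eqref{EQP1}, this establishes the first assertion.

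For the second assertion I would simply invoke Theorem \ref{PSTABP} on the pair now known to be feasible for \eqref{EQP1}: with $\alpha>1$ fixed and \eqref{PerfoC} holding for every $\xi\in[0,\delta_k]$, the sample-and-hold control law \eqref{KTLQR} is stabilizing and the relative performance loss satisfies $\nu(x_0)\le\alpha-1$, which is \eqref{eq:labeldd}. Collecting the two steps yields the statement.

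I do not expect a genuine obstacle, since the result is essentially a corollary of Theorems \ref{propo2} and \ref{PSTABP}; the only point deserving a word of care is the mismatch between the half-open range $\xi\in[0,\delta_k)$ used in the constraint of \eqref{quadobject} and the closed range $\xi\in[0,\delta_k]$ used in \eqref{EQP1} and \eqref{quadob}. I would dispose of this by continuity: the maps $\xi\mapsto e^{A\xi}$, $\xi\mapsto Z(\xi)$, and $\xi\mapsto H_i(\xi)$ are continuous, so the left-hand side of the constraint is continuous in $\xi$, and its non-positivity on $[0,\delta_k)$ extends to the endpoint $\xi=\delta_k$.
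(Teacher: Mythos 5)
Your proposal is correct and follows essentially the same route as the paper, which likewise argues that \eqref{quadob} and \eqref{EQP1} share an identical constraint set (via the vectorized equivalence of Theorem \ref{propo2}) so that feasibility transfers, and then invokes the constraint-based guarantee of Theorem \ref{PSTABP} for stability and the bound \eqref{eq:labeldd}. Your continuity argument closing the gap between the half-open range $\xi\in[0,\delta_k)$ and the closed range $\xi\in[0,\delta_k]$ is a careful touch the paper leaves implicit.
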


	We prove the feasibility of these optimization problems.
	\begin{theorem}\label{Feasi} 
		For every $\alpha>1$, problem \eqref{EQP1} and its relaxation \eqref{quadob} are feasible.
	\end{theorem}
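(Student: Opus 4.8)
The plan is to exhibit an explicit feasible point rather than to characterize the whole feasible set. Since Theorem~\ref{thm:eq} and the discussion preceding it establish that \eqref{EQP1} and its relaxation \eqref{quadob} share \emph{the same} constraint set, it suffices to produce a single pair $(F_k,\delta_k)$ with $\delta_k>0$ that satisfies \eqref{PerfoC} for all $\xi\in[0,\delta_k]$; such a pair is then simultaneously feasible for both problems. The natural candidate is the benchmark gain itself, $F_k=\tilde F$, because it is exactly for this gain that the Lyapunov equation \eqref{FtildeP} is available to us.

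First I would fix $x_k$ and introduce the slack function
$$g(\xi):=J_k(\tilde F,\xi;x_k)-\alpha\big(V(x(t_k))-V(x(t_k+\xi))\big),$$
so that condition \eqref{PerfoC} with $F_k=\tilde F$ reads precisely $g(\xi)\le 0$. Using the closed-form trajectory $x(t_k+\xi)=M_k(\xi)x_k$ and the held input $u_k=\tilde F x_k$, the map $\xi\mapsto g(\xi)$ is smooth with $g(0)=0$. The crux is to compute $g'(0)$. Differentiating the cost-to-go gives the instantaneous running cost $x_k^T(Q+\tilde F^TR\tilde F)x_k$, whereas differentiating the Lyapunov term and substituting $\dot x(t_k)=(A+B\tilde F)x_k$ together with \eqref{FtildeP} yields $\frac{d}{d\xi}V(x(t_k+\xi))\big|_{\xi=0}=-x_k^T(Q+\tilde F^TR\tilde F)x_k$. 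Collecting terms produces
$$g'(0)=(1-\alpha)\,x_k^T\big(Q+\tilde F^TR\tilde F\big)x_k.$$

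Since $\alpha>1$ and $Q+\tilde F^TR\tilde F\succ 0$, for any $x_k\neq 0$ we obtain $g'(0)<0$. Combined with $g(0)=0$ and continuity of $g'$, this forces $g(\xi)<0$ on some interval $(0,\bar\delta]$, hence $g(\xi)\le 0$ for all $\xi\in[0,\bar\delta]$; choosing $F_k=\tilde F$ and $\delta_k=\bar\delta>0$ then furnishes a feasible point with a strictly positive dwell time, which is precisely what feasibility requires. The degenerate case $x_k=0$ is immediate: the trajectory stays at the origin, every term of $g$ vanishes identically, and any $\delta_k>0$ works.

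I expect the main obstacle to be the sample-and-hold mismatch. One cannot invoke the exact continuous-feedback identity $\frac{d}{dt}V=-(x^TQx+u^TRu)$ over the entire interval, because the input is frozen at $u_k=\tilde F x_k$ while $x(t)$ drifts; the identity holds only at the triggering instant $\xi=0$, where the held input coincides with the continuous feedback $\tilde F x(t_k)$. The role of $\alpha>1$ is exactly to convert this first-order equality at $\xi=0$ into the strict inequality $g'(0)<0$, which in turn guarantees a nonempty feasible window and certifies the existence of a positive inter-execution time.
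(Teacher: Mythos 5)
Your proposal is correct and follows essentially the same route as the paper: choose $F_k=\tilde F$, define the slack function $g$ with $g(0)=0$, use the Lyapunov equation \eqref{FtildeP} to compute $g'(0)=(1-\alpha)\,x_k^T(Q+\tilde F^T R\tilde F)x_k<0$, and conclude a strictly positive feasible dwell time, with feasibility of \eqref{quadob} inherited from the identical constraint set via Theorem~\ref{thm:eq}. Your explicit treatment of the degenerate case $x_k=0$ is a minor refinement the paper leaves implicit, but it does not change the argument.
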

	\begin{proof}  \eqref{PerfoC} is equivalent to $g(\xi;F_k)\leq 0$, where 
\begin{equation*} 
\resizebox{1.02\hsize}{!}{$g(\xi;F_k):=  
		x_k^T \left(Y_k(F_k,\xi)+ \alpha \big(-\tilde{P}+M_k(\xi)^T\tilde{P}M_k(\xi)\big)\right) x_k.$}
\end{equation*}
		We can show that $g'(0;F_k)=x_k^T W_k x_k$, in which
		\begin{equation*}
		W_k=Q+F_k^TRF_k + \alpha \big((A+BF_k)^T\tilde{P}+\tilde{P}(A+BF_k)\big).
		\end{equation*}According to (\ref{FtildeP}), we get
		\begin{equation*}
		g'(0;\tilde{F})=(1-\alpha)x_k^T \left (Q+\tilde{F}^TR\tilde{F}\right )x_k. 
		\end{equation*}
		Since $\alpha > 1$ and $Q+\tilde{F}^TR\tilde{F}\succ 0$ , we get $g'(0;\tilde{F}) < 0$. In addition, $g(0;\tilde F)=0$.  Therefore, there exists a positive $\theta_k$ at $F_k=\tilde{F}$ such that  for all $\xi \in (0,\theta_k)$ we have  $g(\xi;\tilde{F}) < 0$. This proves the feasibility of \eqref{EQP1}. Using Theorem \ref{thm:eq} the feasibility of   \eqref{quadob} is followed as well.
	\end{proof}
	
	
	%
	%

	\section{Self-Triggered Sparse Optimal Control} \label{ssocsection}
	
	We  find a sequence of stabilizing feedback gains and   inter-execution times that satisfy  desired performance guarantees. 
	
	\subsection{Solving \eqref{quadob} for $\delta_k$ while $F_k$ is   Fixed} \label{FirStep}
	This concerns a subproblem that is  a component of our   algorithm. Inequality (\ref{quadcon}) is equivalent to the  constraint
	\begin{equation} \label{equpper}
	\frac{1}{2} u_k^T P_2(\xi) u_k + q_2(\xi)^T u_k + r_1(\xi) \le 0, 
	\end{equation}
	where $u_k: = F_k x_k$ and
	\begin{align*}
	P_2(\xi) &:= 2H_2(\xi)+2\alpha B^T Z(\xi)^T \e^{A^T \xi}\tilde{P}e^{A \xi} Z(\xi) B,\\
	q_2(\xi) &:=  \left (2H_1(\xi)^T+2\alpha B^TZ(\xi)^T \e^{A^T \xi}\tilde{P}e^{A \xi} \right ) x_k.
	\end{align*}
	When we fix feedback gain $F_k$, \eqref{quadob} becomes equivalent to the nonlinear optimization problem
	\begin{flalign}
	&  \underset{\delta_k}{\mathrm{maximize}}~ ~\delta_k \tag{\bf P4} \label{EQP4} && \\
	& \mathrm{subject~to:} ~\eqref{equpper}   \mathrm{~for~every~}  \xi \in [0,\delta_k]   . &&\notag 
	\end{flalign}
	The optimal solution of \eqref{EQP4}  can be characterized by  
	\begin{align} \label{omegamax}
	\delta_k^* =  \sup \left. \Big \{\theta_k>0~ \right | (\ref{equpper})~ \mathrm{ holds~for~all~}  \xi \in [0,\theta_k) \Big \}.
	\end{align}
	In order to solve (\ref{omegamax}), we utilize the simple and commonly-used discretization rule with fixed step-size, e.g., see \cite{mazo2009self,heemels2012introduction}.  The general nonlinear optimization  methods can be effective to solve (\ref{omegamax}) as well. 
	In the rest of this paper, we suppose that there exists a procedure $\mathrm{InterExec}(F_k)$  that solves \eqref{EQP4}  for a fixed gain $F_k$ and returns $\delta_k^*$. 
	
	\subsection{Solving for $F_k$  when $\delta_k$ is  Fixed}
	
	When $\delta_k$ is kept fixed, solving \eqref{quadob} for $F_k$ is still an infinite-dimensional optimization. Therefore, we relax the problem 
	by evaluating \eqref{equpper} only at $\xi=\delta_k$ and ensuring the feasibility of 
	\eqref{EQP4} after updating $F_k$ by imposing an additional constraint.
	Applying the Schur complement to \eqref{equpper} and setting the argument $\xi$ to $\delta_k$, we arrive at the LMI constraint 
	\begin{equation} \label{eqlowerform}
	\begin{bmatrix}
	2P_2(\delta_k)^{-1} & F_k x_k \\x_k^T F_k^T & -q_2(\delta_k)^T F_k x_k- r_1(\delta_k)
	\end{bmatrix} \succeq 0.
	\end{equation}
	
	%
	%
	
	\begin{proposition} \label{necsuf} Suppose that $F_k$ satisfies 		\begin{equation} \label{SchurCondition_0}
		\begin{adjustbox}{max width=213 pt}$
		\begin{bmatrix}
		R^{-1} & F_k \\ F_k^T & -\alpha \big((A+BF_k)^T\tilde{P}+\tilde{P}(A+BF_k)\big)-Q
		\end{bmatrix} \succ 0.$
		\end{adjustbox}    
		\end{equation}
		Then, \eqref{EQP4} for feedback gain $F_k$ is feasible.
	\end{proposition}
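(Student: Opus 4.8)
The plan is to reduce the feasibility of \eqref{EQP4} to a one-sided derivative test at $\xi=0$, in exactly the manner used in the proof of Theorem \ref{Feasi}. Recall from \eqref{omegamax} that \eqref{EQP4} is feasible for the gain $F_k$ precisely when the set of $\theta_k>0$ for which \eqref{equpper} holds on $[0,\theta_k)$ is nonempty, i.e. $\delta_k^*>0$. Constraint \eqref{equpper} is the same as $g(\xi;F_k)\le 0$, where $g$ is the smooth scalar function introduced in the proof of Theorem \ref{Feasi}; since $g(0;F_k)=0$, a sufficient condition for the existence of such a $\theta_k$ is the strict inequality $g'(0;F_k)<0$, after which continuity yields $g(\xi;F_k)<0$ on some interval $(0,\theta_k)$ and hence an admissible closed interval for \eqref{EQP4}.

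First I would recall the closed form $g'(0;F_k)=x_k^T W_k x_k$ with $W_k=Q+F_k^T R F_k+\alpha\big((A+BF_k)^T\tilde{P}+\tilde{P}(A+BF_k)\big)$, already derived in the proof of Theorem \ref{Feasi}. Thus it suffices to show that the hypothesis \eqref{SchurCondition_0} forces $W_k\prec 0$. Because $R\succ 0$, the top-left block $R^{-1}$ of \eqref{SchurCondition_0} is positive definite, so I would apply the Schur complement to \eqref{SchurCondition_0} with respect to this block. The resulting Schur complement is $-\alpha\big((A+BF_k)^T\tilde{P}+\tilde{P}(A+BF_k)\big)-Q-F_k^T R F_k$, which is exactly $-W_k$; hence positive definiteness of \eqref{SchurCondition_0} is equivalent to $-W_k\succ 0$, that is $W_k\prec 0$.

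With $W_k\prec 0$ in hand, $g'(0;F_k)=x_k^T W_k x_k<0$ for every nonzero $x_k$, which produces the required $\theta_k>0$ and establishes feasibility of \eqref{EQP4}; the degenerate case $x_k=0$ is immediate, since then \eqref{equpper} reduces to $r_1(\xi)\le 0$ with $r_1\equiv 0$. The computation is essentially routine once the Schur complement is invoked, so I do not anticipate a genuine obstacle; the only points needing care are the algebraic bookkeeping that identifies the Schur complement of \eqref{SchurCondition_0} with $-W_k$ (in particular that $F_k(R^{-1})^{-1}F_k^T$ reproduces the $F_k^T R F_k$ term), and the justification that a strictly negative one-sided derivative of the smooth function $g$ at $\xi=0$, together with $g(0;F_k)=0$, indeed suffices for feasibility.
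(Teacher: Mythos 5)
Your proposal is correct and follows essentially the same route as the paper's proof: reduce feasibility of \eqref{EQP4} to the derivative condition $g'(0;F_k)<0$ via the argument of Theorem \ref{Feasi}, and identify \eqref{SchurCondition_0} with $W_k\prec 0$ through the Schur complement (the paper derives the LMI from $W_k\prec 0$, you verify the converse direction of the same equivalence). Your explicit handling of the degenerate case $x_k=0$ is a minor addition the paper leaves implicit.
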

	\begin{proof}
		If $g'(0;F_k) < 0$, then using the same argument given in the proof of Theorem \ref{Feasi}, \eqref{EQP4} would be feasible.  
		One inspects that inequality $g'(0;F_k)  < 0$  is equivalent  to  
		\begin{equation*}\begin{adjustbox}{max width=240 pt}$
		x_k^T \big(Q+F_k^TRF_k + \alpha \big((A+BF_k)^T\tilde{P}+\tilde{P}(A+BF_k)\big) \big)x_k < 0.
		$ 
		\end{adjustbox}
		\end{equation*}
		For this inequality to hold independent of $x_k$, we need 
		\begin{equation}
		Q+F_k^TRF_k + \alpha \big((A+BF_k)^T\tilde{P}+\tilde{P}(A+BF_k)\big) \prec 0. \label{MInec}
		\end{equation}
		Applying the Schur complement to (\ref{MInec}), we obtain (\ref{SchurCondition_0}).
	\end{proof}
	
	
	Solving the two LMIs \eqref{eqlowerform} and LMI~\eqref{SchurCondition_0} together with our objective function, we arrive at a   semi-definite program 
	\begin{flalign}
	& \hspace{-0cm}F_k^*= \hspace{0cm} \underset{F_k}{\mathrm{argmin.}}~~ \gamma \| F_k \|_{1} + \eta \|F_k x_k\|_{1} && \tag{\bf P5} \label{EQP6} \\
	& \hspace{0.95cm} \mathrm{subject~to: ~ LMI~\eqref{eqlowerform}~ and ~LMI~\eqref{SchurCondition_0},}  && \notag
	& \hspace{-0cm}
	\end{flalign}
	The gain $F_k$ that is obtained from \eqref{EQP6} may not satisfy inequality (\ref{quadcon}) for all $\xi \in [0,\delta_k]$ that may arise  from ignoring the continuous satisfaction of the  constraint. To remedy this, we leverage  \eqref{omegamax} and modify the value of   $\delta_k$ for the updated value of $F_k$  (see the next subsection). In the rest of this paper, we suppose that there exists a procedure $\mathrm{FeedbackGain}(\delta_k)$  that solves \eqref{EQP6}  for a fixed $\delta_k$ and returns $F_k^*$.
	
	\begin{algorithm}[t]
		\caption{Self-Triggered Sparse Optimal Control}
		\label{alg:ssoc}
		\begin{algorithmic}  
			\algnewcommand\algorithmicinitz{\textbf{initialize:}}
			\algnewcommand\Init{\item[\algorithmicinitz]}
			\renewcommand{\algorithmicrequire}{\textbf{input:}}
			\renewcommand{\algorithmicensure}{\textbf{output:}} 
			\Require $A$, $B$, $Q$, $R$,    $\gamma$, $\eta$, and $\alpha$. \vspace{1mm}
			\Ensure $\{F_k\}_{k=0}^{\infty}$, $\{\delta_k\}_{k=0}^{ \infty }$, and $\{t_k\}_{k=0}^{ \infty }$\vspace{1mm}
			\Init  $t_0=0$, $N_0 \leftarrow I$, evaluate  $\e^{A \xi}$, $H_0(\xi)$, $H_1(\xi)$, $H_2(\xi)$, and $Z(\xi)$ offline  and save them in a look-up table \vspace{0.15cm} 
			\For {$k=0$ to $\infty$} \vspace{1mm}
			\State $F_k \leftarrow \tilde F$  \vspace{1mm}
			\State  $\delta_k \leftarrow \mathrm{InterExec}(F_k)$  \vspace{1mm}  
			\State $F_k \leftarrow \mathrm{FeedbackGain}(\delta_k)$   \vspace{1mm}   
			\State 	$\delta_k \leftarrow \mathrm{InterExec}(F_k)$  \vspace{1mm}    
			\State $M_k(\delta_k) \leftarrow \e^{A \delta_k} (I+Z(\delta_k)BF_k)$ \vspace{1mm}
			\State  $N_{k+1}\leftarrow M_k(\delta_k)N_k$, ~$t_{k+1} \leftarrow t_{k}+\delta_k$  \vspace{1mm}
			\vspace{0.05cm}
			\EndFor
		\end{algorithmic}
	\end{algorithm}
	
	\subsection{Algorithm}
	
	Based on the discussions of the previous subsection, our control design algorithm consists of the following three steps:  {\it (i)}  solve \eqref{EQP4}  to find the first estimate of $\delta_k$ for fixed $F_k=\tilde F$,  {\it (ii)} for a fixed value of $\delta_k$,   find a value for $F_k$ using \eqref{EQP6}, and {\it (iii)}  for the updated value of feedback gain $F_k$,   correct the value of $\delta_k=\delta_k^*$ using \eqref{EQP4}. These steps are summarized in Algorithm \ref{alg:ssoc}. 
	
	\begin{theorem} The sequences of feedback gains $\{F_k\}_{k=0}^\infty$ and triggering times $\{t_k\}_{k=0}^\infty$  resulting from   Algorithm 1 gives a stabilizing control law \begin{equation}  
		u(t)= F_k\, x_k,
		\end{equation}
		for all $t \in [t_k,t_{k+1})$. Moreover, for all $x_0 \in \R^n$,  the relative performance loss $\nu(x_0)$ satisfies 
		\begin{align} \label{eq:bnd} 
		\nu(x_0)\, \leq \, \alpha-1.
		\end{align} 
	\end{theorem}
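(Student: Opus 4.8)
The plan is to reduce the entire statement to Theorem~\ref{PSTABP}. That theorem already shows that if, for a single fixed $\alpha>1$, the performance inequality \eqref{PerfoC} holds for every $\xi\in[0,\delta_k]$ at every index $k$, then the sample-and-hold law \eqref{KTLQR} is stabilizing and obeys $\nu(x_0)\le\alpha-1$. Since Algorithm~\ref{alg:ssoc} uses one fixed input $\alpha>1$ throughout, the whole task is to verify that the final pair $(F_k,\delta_k)$ emitted at each iteration — the one used to build $M_k(\delta_k)$, $N_{k+1}$ and $t_{k+1}$ — satisfies \eqref{PerfoC} on the closed interval $[0,\delta_k]$. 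Once this is established for all $k$, the bound \eqref{eq:bnd} and the stability claim follow at once from Theorem~\ref{PSTABP}.

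First I would trace the three inner sub-steps. The initial call $\delta_k\leftarrow\mathrm{InterExec}(\tilde F)$ solves \eqref{EQP4}, returning the supremum \eqref{omegamax} of dwell times for which \eqref{equpper} holds; by the computation in the proof of Theorem~\ref{Feasi} one has $g'(0;\tilde F)<0$ and $g(0;\tilde F)=0$, so this first $\delta_k$ is strictly positive, and by continuity of the left-hand side of \eqref{equpper} in $\xi$ the inequality \eqref{equpper} holds at $\xi=\delta_k$ for $\tilde F$. The delicate point is the second sub-step $F_k\leftarrow\mathrm{FeedbackGain}(\delta_k)$, which I must show is never infeasible. Here the plan is to exhibit $\tilde F$ as a certified feasible point of \eqref{EQP6}. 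On one hand, substituting \eqref{FtildeP} into the lower-right block of \eqref{SchurCondition_0} collapses it, after a Schur-complement step on the $R^{-1}$ block, to the condition $(\alpha-1)\big(Q+\tilde F^{T} R\tilde F\big)\succ0$, which holds because $\alpha>1$, $Q\succ0$ and $R\succ0$; hence $\tilde F$ satisfies LMI~\eqref{SchurCondition_0}. On the other hand, LMI~\eqref{eqlowerform} is exactly the Schur-complement form of \eqref{equpper} evaluated at $\xi=\delta_k$ with $u_k=\tilde F x_k$ (valid since $P_2(\delta_k)\succ0$ by Proposition~\ref{H2del}), and the first $\mathrm{InterExec}$ just guaranteed \eqref{equpper} at $\xi=\delta_k$ for $\tilde F$; thus $\tilde F$ satisfies LMI~\eqref{eqlowerform} as well. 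Consequently \eqref{EQP6} is feasible and returns some $F_k$ which, being a feasible point, itself satisfies \eqref{SchurCondition_0}.

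Next, because the returned $F_k$ satisfies \eqref{SchurCondition_0}, Proposition~\ref{necsuf} guarantees that \eqref{EQP4} is feasible for this gain, i.e.\ $g'(0;F_k)<0$, so a strictly positive dwell time exists. The third sub-step $\delta_k\leftarrow\mathrm{InterExec}(F_k)$ then returns $\delta_k=\delta_k^{*}$ as in \eqref{omegamax}: this value is therefore strictly positive, and by construction \eqref{equpper} — equivalently \eqref{PerfoC} via Theorem~\ref{thm:eq} — holds for all $\xi\in[0,\delta_k^{*})$. Since $g(\cdot;F_k)$ is continuous in $\xi$, the inequality extends to the endpoint $\xi=\delta_k^{*}$, yielding \eqref{PerfoC} on the full closed interval $[0,\delta_k]$ for the final pair. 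This correction step is precisely what repairs the fact that the LMI stage \eqref{eqlowerform} enforced the constraint only at the single point $\xi=\delta_k$ rather than continuously.

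Finally, the argument above applies verbatim for every $k$ with the common input $\alpha>1$, so the hypotheses of Theorem~\ref{PSTABP} are met, and invoking that theorem delivers both the stabilization of \eqref{KTLQR} and the performance bound \eqref{eq:bnd}. I expect the main obstacle to be the feasibility guarantee of \eqref{EQP6} at each iteration — ensuring $\mathrm{FeedbackGain}$ never fails — which is exactly what the $\tilde F$ feasibility certificate above is designed to settle; the positivity of the dwell times then follows from Proposition~\ref{necsuf}, and the continuous-time constraint is recovered by the final $\mathrm{InterExec}$ correction.
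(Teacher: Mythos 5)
Your proof is correct and follows essentially the same route as the paper's: the incorporated LMI \eqref{SchurCondition_0} forces $g'(0;F_k)<0$ (via Proposition~\ref{necsuf}), so the final $\mathrm{InterExec}$ call returns a strictly positive $\delta_k^*$ for which \eqref{equpper} holds on all of $[0,\delta_k^*]$, and Theorem~\ref{PSTABP} then delivers both stability and the bound \eqref{eq:bnd}. Your write-up is in fact more complete than the paper's two-sentence proof, since you additionally certify that $\mathrm{FeedbackGain}$ never fails by exhibiting $\tilde F$ as a feasible point of both LMIs \eqref{eqlowerform} and \eqref{SchurCondition_0} (using \eqref{FtildeP} and the first $\mathrm{InterExec}$ call) --- a step the paper leaves implicit.
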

	\begin{proof}
		Because LMI \eqref{SchurCondition_0} has been incorporated, for the derived value of $F_k$, $g'(0;F_k)<0$, thus, the corresponding value of $\delta_k^*$ is strictly positive. Since for $F_k$ and $\delta_k=\delta_k^*$, the  constraint of the problem \eqref{equpper} is satisfied, \eqref{eq:bnd} holds. 
	\end{proof}
	
	Optimal control problem \eqref{EQP6}  can be solved by a convex optimization toolbox such as CVX \cite{grant2008cvx}. 

	
	\begin{remark} Matrices $\mathrm{e}^{A\xi}$, $Z(\xi)$, $H_0(\xi)$, $H_1(\xi)$ and $H_2(\xi)$ can be computed offline and over a grid with arbitrary resolution, as their values only depend on the state-space matrices, $A$ and $B$, weight matrices, $Q$ and $R$, and  time argument, $\xi$. Then, during the execution, we can use look-up tables to enhance the computational complexity of our algorithm. 
	\end{remark}
	
	\section{Numerical Simulations} \label{NuSi}
	\begin{figure}[t]
		\centering
		\vspace{.05cm}
		{
			\includegraphics[ width=5.3cm]{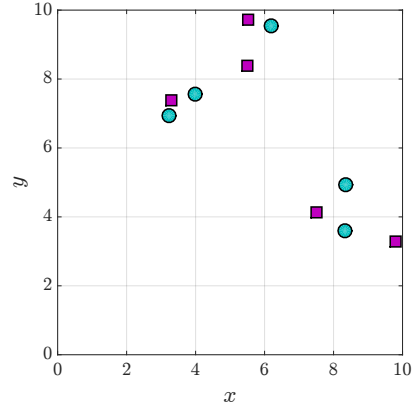}
		}
		\caption{\small Randomly generated positions of $N=10$ subsystems.}
		\label{figpos}
	\end{figure}
	To assess the effectiveness of the proposed  procedure, we consider a class of spatially distributed systems (see \cite{motee2008optimal} for more details).  
	We consider a system of $N=10$ subsystems, which are randomly placed in the region $[0,10] \times [0,10]$; see Fig. \ref{figpos}, in which each shape represents a subsystem. Then, the dynamics of  $i$'{th} subsystem is characterized by \vspace{-3mm}
	\begin{equation*}
	\dot{x}^{(i)}(t) =  [A]_{ii} x^{(i)}(t) + \sum_{j=1,\atop j \neq i}^{N} [A]_{ij} x^{(j)}(t) + [B]_{ii} u^{(i)}(t),\vspace{-5mm}
	\end{equation*}
	where \vspace{-5mm}
	\begin{align*}
	&[A]_{ii} = \begin{bmatrix}  1 & ~~~~\hspace{0.05cm}1 \\ 1 &~~~~\hspace{0.05cm} 2\end{bmatrix}, [B]_{ii} =\begin{bmatrix}  0 \\ 1 \end{bmatrix}~ \textrm{for square shapes},\\
	&[A]_{ii} = \begin{bmatrix}  -2 & 1\\1 & -3\end{bmatrix}, [B]_{ii} =\begin{bmatrix}  0 \\ 1 \end{bmatrix}~ \textrm{for circle shapes}, \\
	&[A]_{ij}= \frac{1}{\e^{\beta \mathrm{dis}(i,j)}} \begin{bmatrix}  1 & ~~~~0 \\ 0 & ~~~~1\end{bmatrix}, [B]_{ij} =\begin{bmatrix}  0 \\ 0 \end{bmatrix},~\forall j\neq i,
	\end{align*}
	where $\beta$ determines the rate of decay in  the couplings and $\mathrm{dis}(i,j)$ denotes the Euclidean distance between nodes $i$ and $j$ in $\R^2$. The resulting system is of dimension $n=20$. We set  $\beta=1$, $\gamma=\eta=0.001$, $\alpha=1.15$, $Q= I$, and $R=2 I$. We apply Algorithm 1 with $k_{\mbox{{\tiny max}}}=49$ iterations in the \textbf{for} loop  and obtain Figures \ref{figrho0} and \ref{figrhoN0}, which depict the relative cardinality of feedback gains given by  
	\begin{equation} \label{eq:kappak}
	\kappa_k:=100 \, {\big \|F_k \big \|_{0}} /{\big \|\tilde F \big \|_{0}},
	\end{equation}
	and relative cardinality of control input vectors: 
	\begin{equation}\label{eq:kappau}
	\mu_k:=100\,  {\big \|u_k \big \|_{0}}/{\big \|\tilde u_k \big \|_0},
	\end{equation}
	respectively. 
	Figures \ref{figrho0} and \ref{figrhoN0} illustrate that 
	this method improves the spatial sparsity compared to the periodic feedback control with $F_k=\tilde F$ for all time intervals. In average, the cardinalities of  $F_k$ and $u_k$ have almost decreased by $38\%$ and $50\%$, respectively, while we lose about $15\%$ in performance. Fig. \ref{figro0} compares  $\|x\|_2$ for two designs: our method and periodic time-triggered LQR design. In Fig. \ref{figroN0}, we illustrate the 
	state  of the system using the time-triggered design, starting from an initial state. Fig. \ref{figvs3} shows the inter-execution times $\delta_k$ versus time $t$. To measure the average   relative cardinality of feedback gains over time, we define   \vspace{-3mm}
	\begin{equation*}
	{R}_F:=  {\displaystyle \sum_{k=0}^{k_{\mbox{{\tiny max}}}}\delta_k \kappa_k } \bigg / {\displaystyle \sum_{k=0}^{k_{\mbox{{\tiny max}}}}\delta_k } ,
	\end{equation*}
	which represents the average sensor activity throughout time. Likewise, the average relative cardinality of control inputs  over time is defined as  \vspace{-3mm}
	\begin{equation*}
	{R}_u:= {\displaystyle \sum_{k=0}^{k_{\mbox{{\tiny max}}}}\delta_k \mu_k } \bigg / {\displaystyle \sum_{k=0}^{k_{\mbox{{\tiny max}}}}\delta_k } ,
	\end{equation*}
	which  represents the average actuator activity throughout time. 
	Also, the average inter-execution time  is   $
	{D}:= \big( \sum_{k=0}^{k_{\mbox{{\tiny max}}}} \delta_k \big) / k_{\max}, $
	 that quantifies the time sparsity of (state) sampling. 
	The dependency of these quantities on $\alpha$ is captured in Table \ref{TDep}, which demonstrates that by accepting higher levels of  performance loss, we achieve sparser control designs in terms of both sensing and actuation activity (i.e., smaller values of $R_F$ and $R_u$) and sampling activity in time (i.e., larger values of $D$). In the next simulations, we preserve the  parameters 
	except for $\beta$. Figures \ref{FigRF}, \ref{FigRU}, and \ref{FigD} demonstrate the dependency of $R_F$, $R_u$, and $D$ on $\beta$, respectively. Fig. \ref{FigRF} and Fig. \ref{FigRU} depict that there is a tradeoff between the spatial decay rate $\beta$ and the indices of sensing and actuating activity, $R_F$ and $R_u$: as $\beta$ increases, the   system tends to be more localized and, consequently,  $F_k$'s and   $u_k$'s  become sparser. Moreover, according to Fig. \ref{FigD}, as the  system gets more localized, the average inter-execution time increases; i.e., less number of samples are  required on average (in time).
	
	\begin{figure}[t]
		\centering
		
		\subfloat[\label{figrho0}]{
			\includegraphics[width=7.1cm]{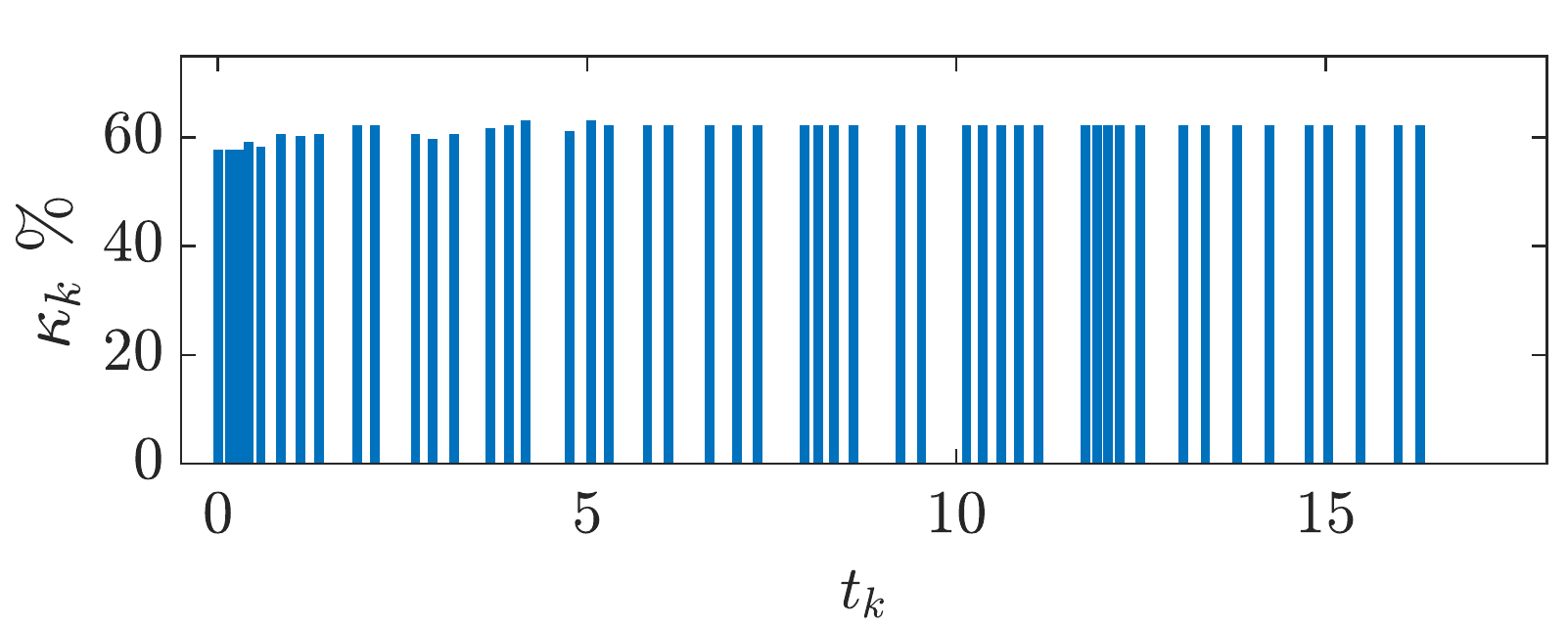}
		}
		\hfill
		\subfloat[\label{figrhoN0}]{
			\includegraphics[width=7.1cm]{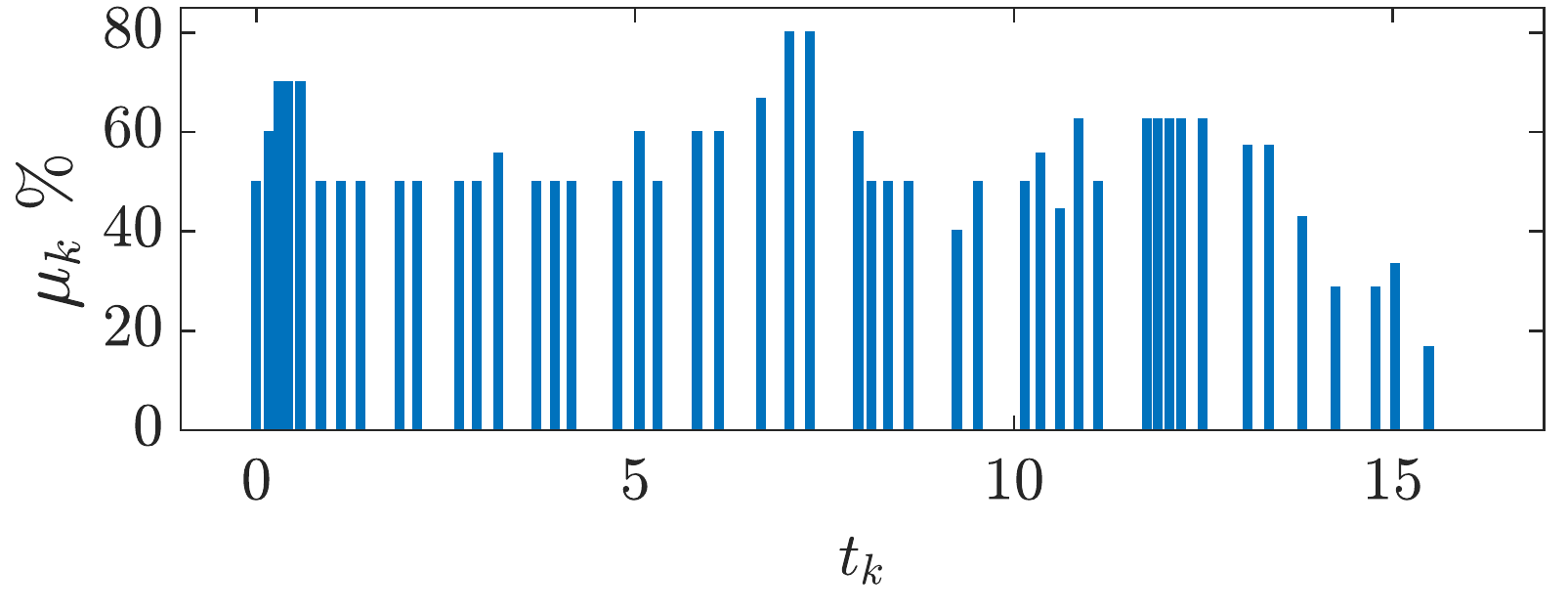}
		}
		\caption{\small (a) Relative cardinality of controllers $\kappa_k$ defined in \eqref{eq:kappak} versus triggering times $t_k$ (b) Relative cardinality of control inputs $\mu_k$ defined in \eqref{eq:kappau} versus triggering times $t_k$.}
		\label{figvs_lambda}
	\end{figure}
	
	\begin{figure}[t]
		\centering
		\subfloat[\label{figro0}]{
			\includegraphics[width=6.7cm]{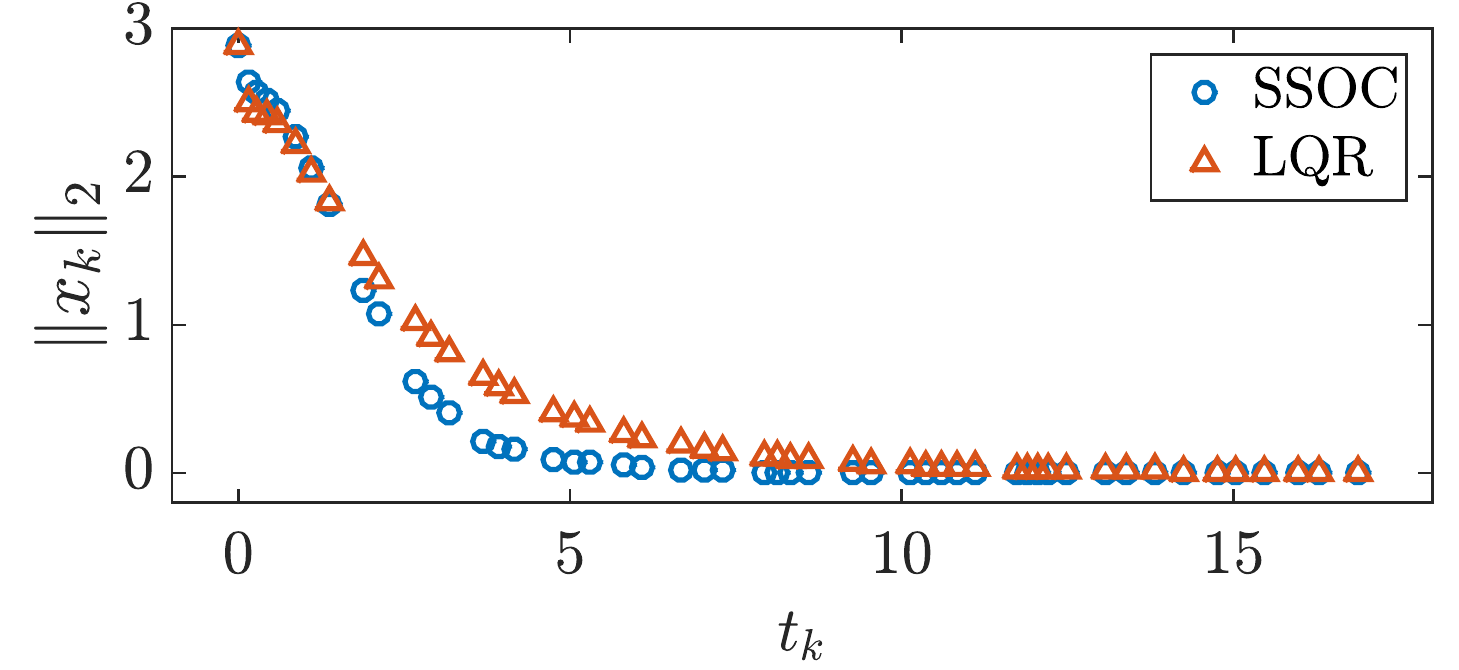}
		}
		\hfill
		\subfloat[\label{figroN0}]{
			\includegraphics[width=6.7cm]{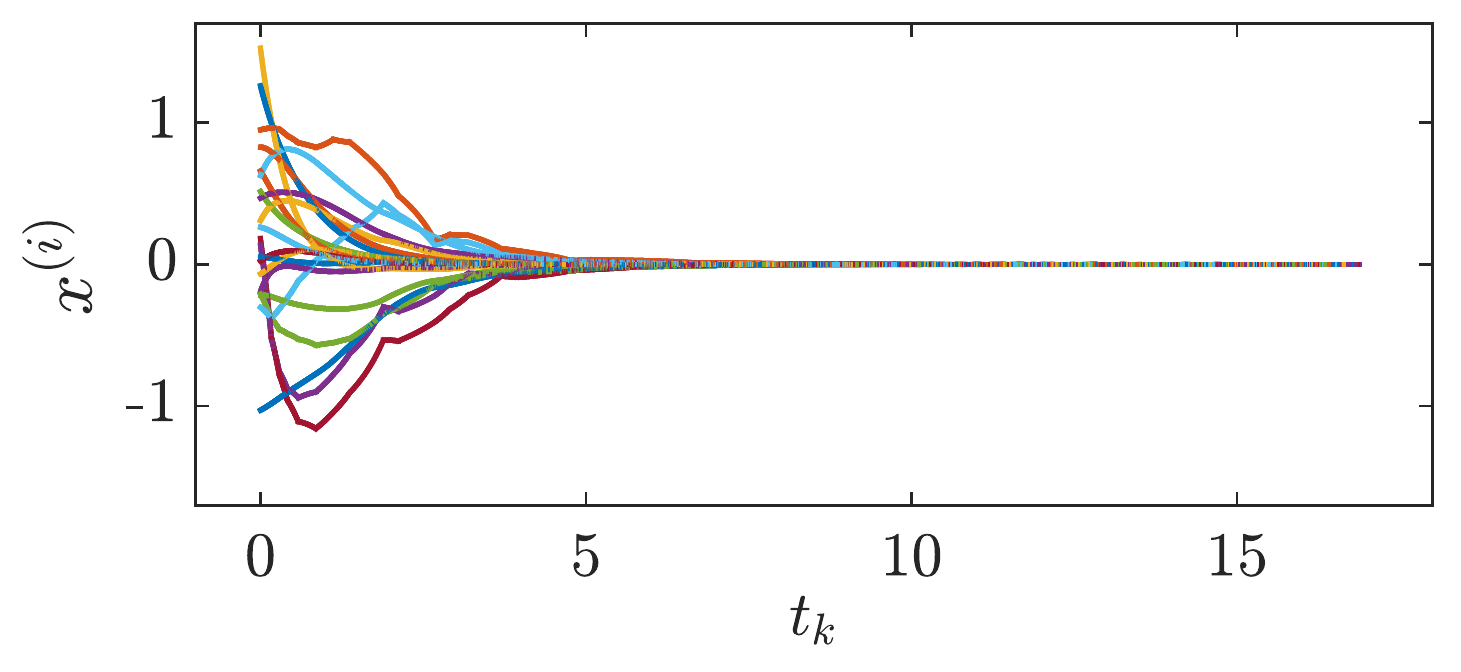}
		}
		\caption{\small (a) The Euclidean norm of state trajectories of SSOC and periodic time-triggered LQR design $\|x_k\|_2$ versus triggering times $t_k$ (b) state trajectories $x^{(i)}$'s of our design with  a random  $x_0$}
		\label{figvs_lambda2}
	\end{figure}
	
	\begin{figure}[t]
		\centering
		\vspace{.2cm}
		{
			\includegraphics[width=6.8cm]{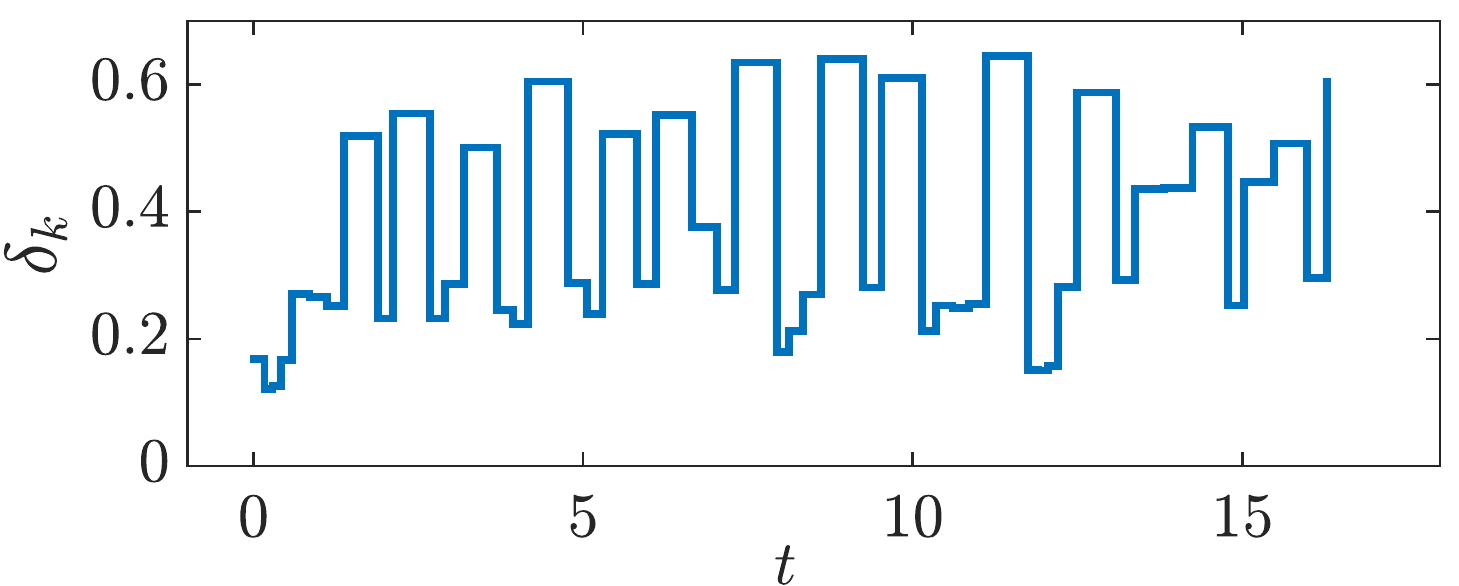}
		}
		\caption{\small Inter-execution times $\delta_k$ versus time $t$.}
		\label{figvs3}
	\end{figure}

	
	\begin{table}[t] 
		\caption{\small Dependency of quantities $R_F$, $R_u$, and $D$ on parameter $\alpha$.} 
		\centering
		
		\begin{tabular}{|c| c |c|c|} 
			\hline
			$\alpha$ & $R_F$ & $R_u$ & $D$  \\ 
			\hline
			$1.05$ & $85.56 \%$ & $62.54 \%$ & $0.2823 $ \\ 
			\hline
			$1.10$ & $72.60 \%$ & $66.89 \%$ & $0.2715 $ \\ 
			\hline
			$1.15$ & $61.61 \%$ & $49.67 \%$ & $0.3322 $ \\
			\hline
			$1.20$ & $53.86 \%$ & $41.04 \%$ & $0.3247 $ \\
			\hline
			$1.25$ & $50.63 \%$ & $36.81 \%$ & $0.3629 $ \\
			\hline
			$1.30$ & $49.05 \%$ & $32.00 \%$ & $0.3529 $ \\
			\hline
		\end{tabular} \label{TDep}
	\end{table} 
	
	\begin{figure}[t]
		\centering
		
		\subfloat[\label{FigRF}]{
			\includegraphics[width=6.4cm]{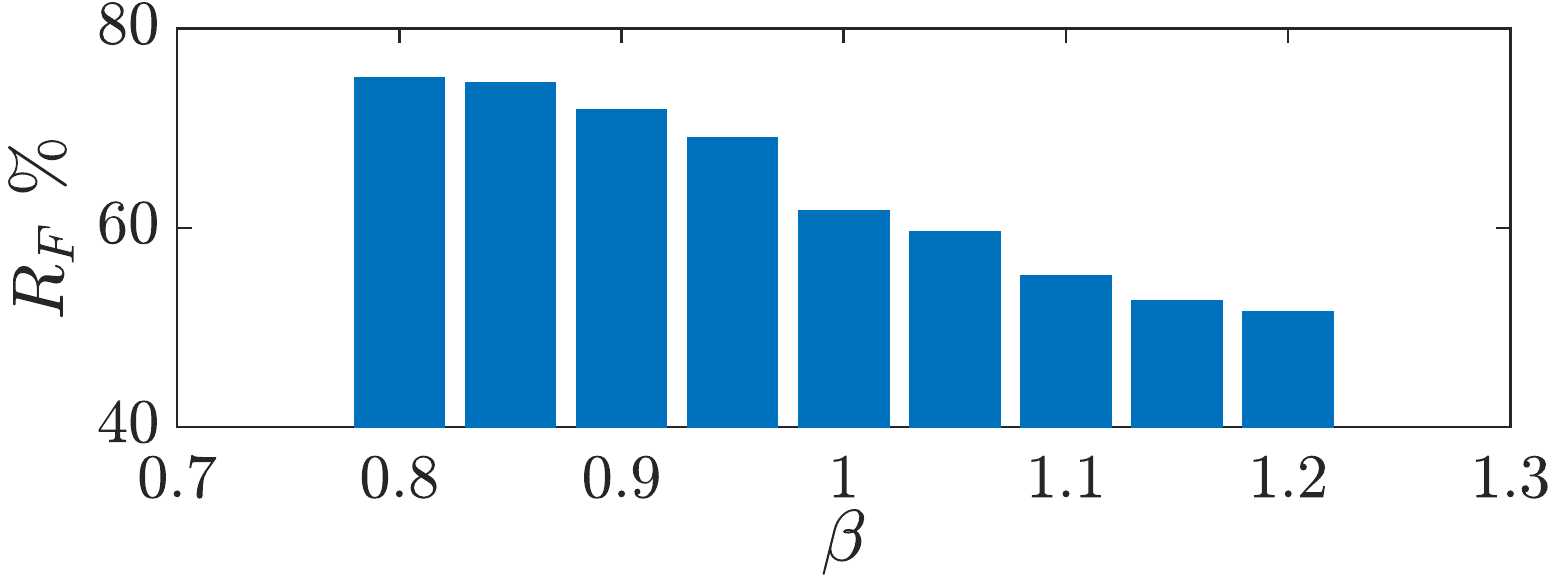}
		}
		\hfill
		\subfloat[\label{FigRU}]{
			\includegraphics[width=6.4cm]{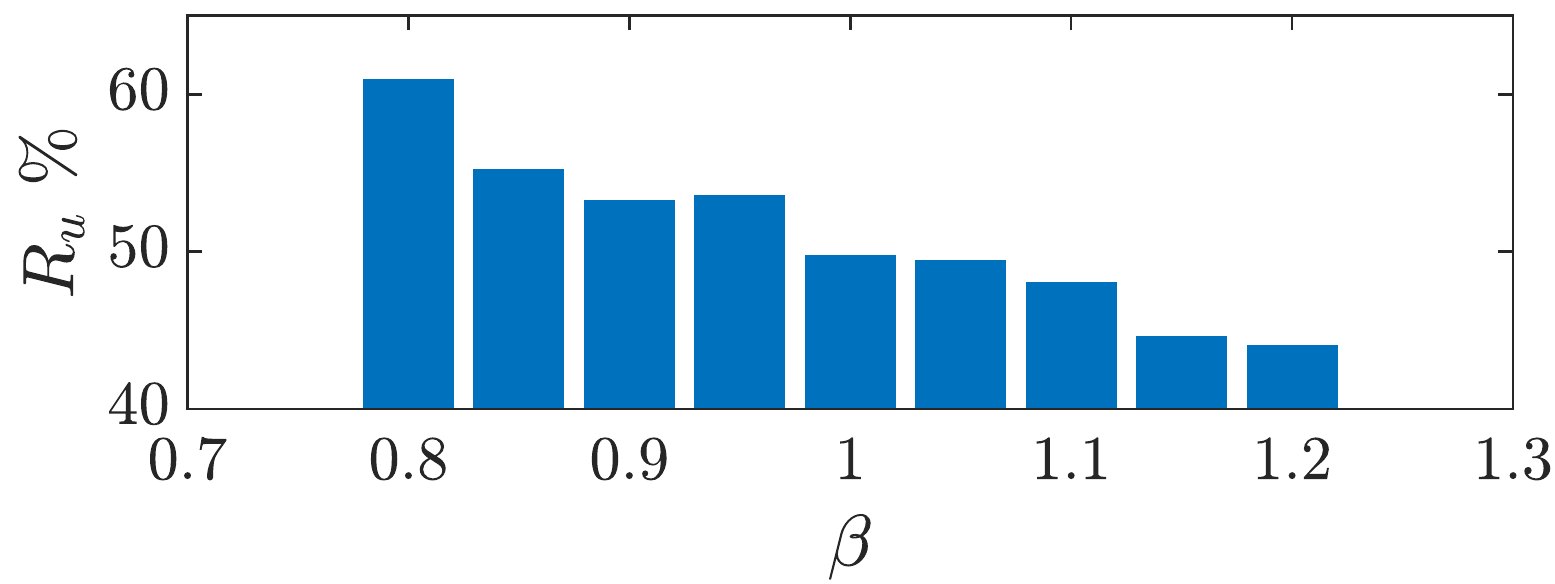}
		}
		\caption{\small (a)   ${R}_F$ versus decay rate $\beta$. (b)  ${R}_u$ versus   decay rate  $\beta$.}
	\end{figure}
	
	\begin{figure}[t]
		\centering
		
		{
			\includegraphics[width=6.6cm]{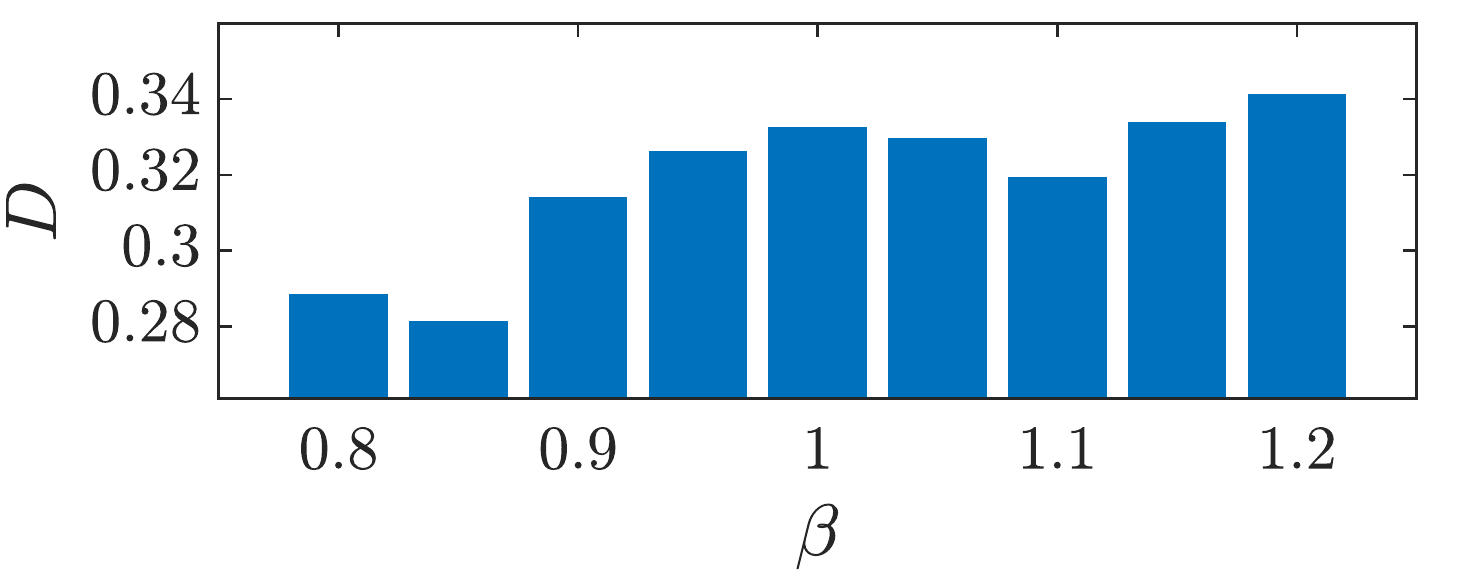}
		}
		\caption{\small Average inter-execution time $D$ versus spatial decay rate $\beta$.}
		\label{FigD}
	\end{figure}
	
	We consider the previous setup and study the effect of penalizing parameters $\gamma$ and $\eta$. To investigate the effect of  $\gamma$ on ${R}_F$, we set $\alpha = 1.15$ (i.e., accepting at most $15\%$ performance loss), $\eta=0.001$, and vary $\gamma$ in  $\left [10^{-5},10^{-3}\right ]$. The result of Algorithm $1$ shows that ${R}_F$ decreases as $\gamma$ grows (see Fig. \ref{figvs4}), which implies that the average sensing requirements of the subsystems from each other has decreased.  
	We repeat the previous study for fixed value of $\gamma=10^{-3}$ and by varying $\eta$ in $\left [10^{-5},10^{-3}\right ]$. Fig. \ref{figvs5} illustrates the relationship between $\eta$ and ${R}_u$. Similar to $({R}_F$,$\gamma)$ relationship, there exists a tradeoff between ${R}_u$ and $\eta$: as  $\eta$ increases, on average, it enforces more components of $u_k$ to be zero; i.e.,  less number of actuators on average are utilized (in time). 
	
	\begin{figure}[t]
		\centering
		\vspace{.2cm}
		
		\subfloat[\label{figvs4}]{
			\includegraphics[width=6.7cm]{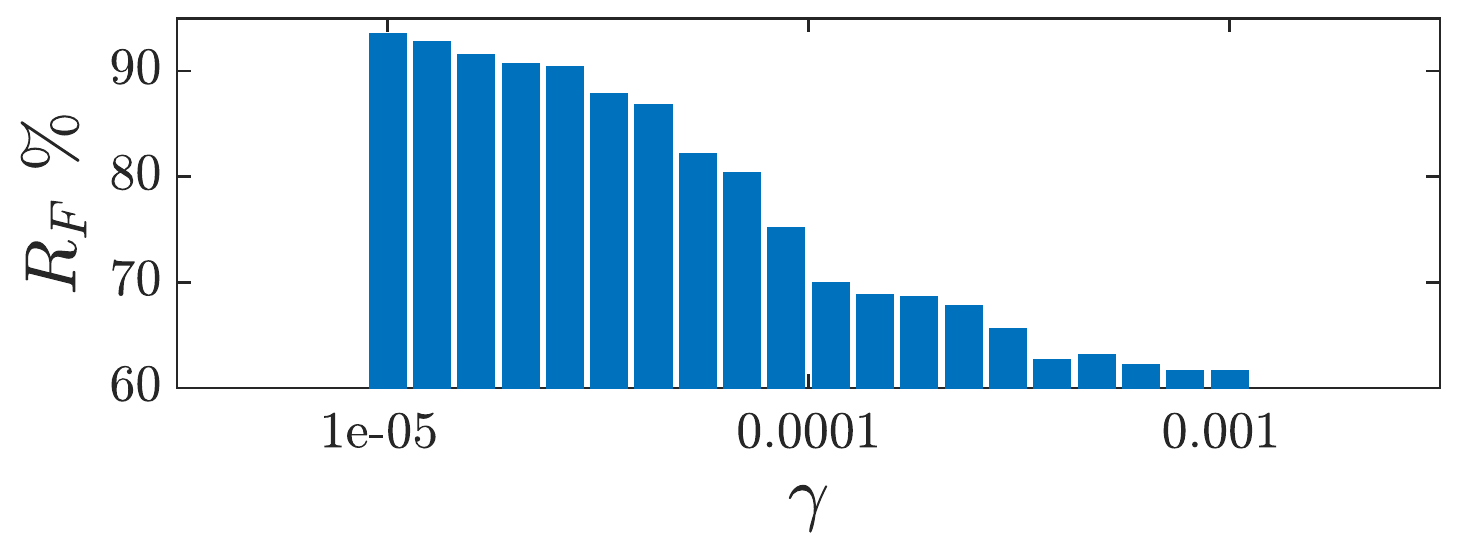}
		}         
		\hfill
		\subfloat[\label{figvs5}]    {
			\includegraphics[width=6.7cm]{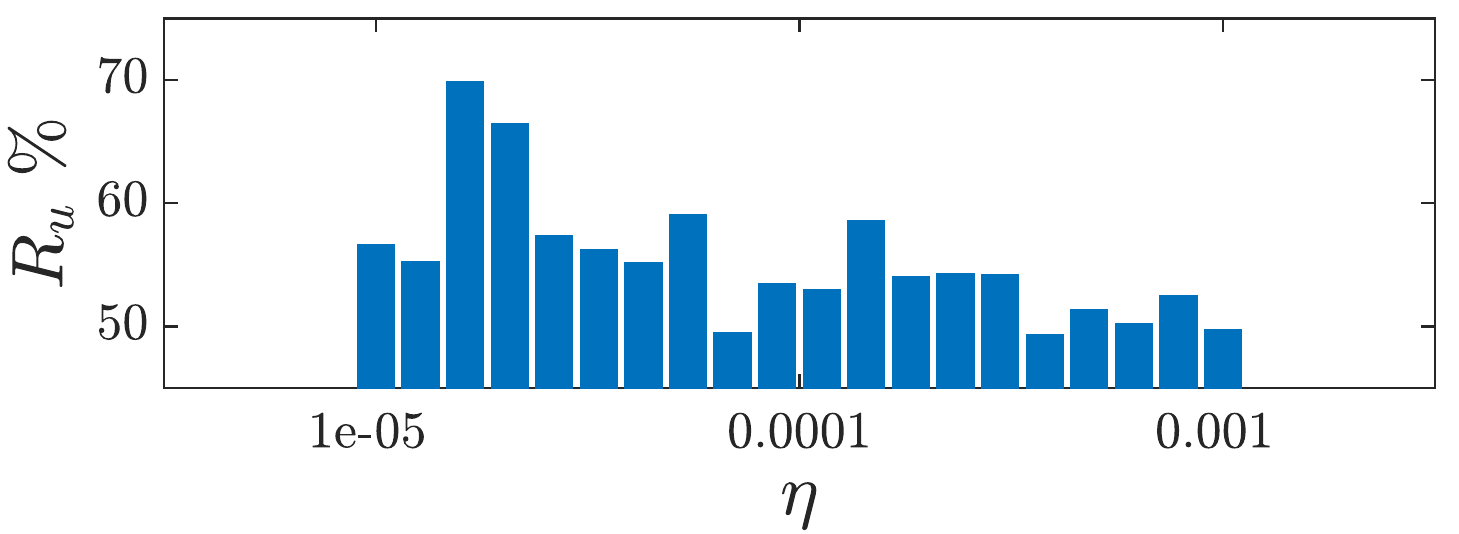}
		}      
		\caption{\small (a)   ${R}_F$ versus penalizing parameter $\gamma$. (b)  ${R}_u$ versus penalizing parameter $\eta$.}
	\end{figure}

%
%
%

\section{Discussion and Conclusion} \label{CoDi}
We present a control algorithm based on blending ideas from self-triggered control and $\ell_1$-regularized optimal control. The cornerstones of our algorithm are: estimation of the inter-execution time while feedback gain is kept fixed (via a nonlinear optimization), design of the sparse optimal controller while inter-execution time is kept fixed (via a convex optimization), and correcting the value of the next triggering times using the resulting feedback gain. The closed-loop stability is guaranteed by imposing a performance-ensuring constraint. Our extensive  numerical simulations assert that our proposed algorithm improves space-time sparsity in sensing, communication, and actuation in an interconnected network of multiple systems.   In the case of spatially decaying systems, our simulations show that there exists a tradeoff between the spatial decay rate and sparsity of the communication graph (i.e., sparsity of the feedback gains). Likewise, as  the spatial decay rate increases, the average sampling rate decreases. One of our future research directions is to modify our algorithm and implement it   in a distributed and asynchronous manner using localized information. 

\section*{Acknowledgment}
The authors would like to thank Prof. Paulo Tabuada for his fruitful comments and discussions. 

\bibliographystyle{plain}
\bibliography{autosam}

\end{document}